\newcommand\pgen{p_{\mathrm{gen}}}
\newcommand\pswap{p_{\mathrm{swap}}}
\newcommand\pfuse{p_{\mathrm{fuse}}}
\newtheorem{theorem}{Theorem}
\newtheorem{remark}{Remark}
\newtheorem{df}{Definition}
\newtheorem{lemma}{Lemma}
\newtheorem{prop}{Proposition}
\newtheorem{example}{Example}
\newtheorem{cor}[theorem]{Corollary}
\newcommand{\mean}[1]{E\left[#1\right]}
\def\Tupper{T^{\textnormal{upper}}}
\let\oldnl\nl
\newcommand{\nonl}{\renewcommand{\nl}{\let\nl\oldnl}}
\def\generate{\mbox{\sc generate}}
\def\restartuntilsuccess{\mbox{\sc restart-until-success}}
\def\swapuntilsuccess{\mbox{\sc swap-until-success}}
\def\distilluntilsuccess{\mbox{\sc distill-until-success}}
\def\Tgen{T_{\textnormal{gen}}}
\def\Tupper{T^{\textnormal{upper}}_{\textnormal{gen}}}
\def\Tapprox{T^{\textnormal{approx}}_{\textnormal{gen}}}
\def\Texp{T_{\textnormal{exp}}}
\def\stgeq{\geq_{\textnormal{st}}}
\def\stleq{\leq_{\textnormal{st}}}
\def\mugen{\mu_{\textnormal{gen}}}
\def\mugenupper{\mu_{\textnormal{gen}}^{\textnormal{upper}}}
\def\mupper{m_{\textnormal{upper}}}
\def\mlower{m_{\textnormal{lower}}}
\def\Tafter{T_{\textnormal{output}}}
\def\Tbefore{T_{\textnormal{input}}}
\def\Trepeater{\mathcal{T}}
\def\Tarm{T_{\textnormal{arm}}}
\def\Tswitch{T_{\textnormal{switch}}}
\begin{document}
\title{Improved analytical bounds on delivery times of long-distance entanglement}

\author{Tim Coopmans}
\email{t.j.coopmans@tudelft.nl}
\affiliation{QuTech, Delft University of Technology, The Netherlands}

\author{Sebastiaan Brand}
\email{s.o.brand@liacs.leidenuniv.nl}
\affiliation{Leiden Institute of Advanced Computer Science, Leiden University, The Netherlands.}

\author{David Elkouss}
\email{d.elkousscoronas@tudelft.nl}
\affiliation{QuTech, Delft University of Technology, The Netherlands}

\begin{abstract}
	The ability to distribute high-quality entanglement between remote parties is a necessary primitive for many quantum communication applications.
A large range of schemes for realizing the long-distance delivery of remote entanglement has been proposed, both for bipartite and multipartite entanglement.
For assessing the viability of these schemes, knowledge of the time at which entanglement is delivered is crucial.
Specifically, if the communication task requires multiple remote-entangled quantum states and these states are generated at different times by the scheme, the earlier states will need to wait and thus their quality will decrease while being stored in an (imperfect) memory.
For the remote-entanglement delivery schemes which are closest to experimental reach, this time assessment is challenging, as they consist of nondeterministic components
such as probabilistic entanglement swaps.
For many such protocols even the average time at which entanglement can be distributed is not known exactly, in particular when they consist of feedback loops and forced restarts.
In this work, we provide improved analytical bounds on the average and on the quantiles of the completion time of entanglement distribution protocols in the case that all network components have success probabilities lower bounded by a constant.
A canonical example of such a protocol is a nested quantum repeater scheme which consists of heralded entanglement generation and entanglement swaps.
For this scheme specifically, our results imply that a common approximation to the mean entanglement distribution time, the 3-over-2 formula, is in essence an upper bound to the real time.
Our results rely on a novel connection with reliability theory.

\end{abstract}
\maketitle

\section{Introduction}

The Quantum Internet is a vision of a world-wide network of nodes with the capability to transmit and process quantum information \cite{kimble2008quantum,wehner2018quantum}.
Such a network enables tasks that are impossible classically, among which unconditionally-secure communication \cite{bennett1984quantum, ekert1991quantum}, secure delegated computing \cite{childs2005secure} and extending the baseline of telescopes \cite{kellerer2014quantum}.
A primitive for such tasks is entanglement between remote nodes.
For establishing entanglement over distances beyond the fundamental distance limit~\cite{takeoka2014fundamental}, several schemes have been proposed, all making use of intermediate nodes~\cite{munro2015inside}.
These proposals include chains of quantum repeaters \cite{briegel1998quantum, munro2015inside, muralidharan2016optimal} and generalizations to two-dimensions for serving multiple users \cite{vardoyan2019stochastic, pirker2017modular, wallnoefer2019multipartite, pant2017routing, kuzmin2019scalable, wallnoefer2016twodimensional, das2018robust}.

Knowledge of the time that quantum repeater schemes take to deliver entanglement is highly relevant, for several reasons.
Most evidently, the entanglement should be delivered sufficiently fast for the application.
Secure communication over video, for example, requires transmission rates of at least hundreds of kbits per second \cite{schmidt2016mbps}.
Furthermore, for the repeater proposals which make use of quantum memories and do not rely on error correcting codes, i.e. the ones that are closest to experimental reach, the delivery time influences the quality of the produced entanglement.
The reason for this is that in these schemes, an entangled pair that is generated often needs to wait for another pair before the scheme can continue, and decoheres in memory while waiting.
In addition, some memory types suffer from effects which are effectively time-dependent.
For instance, noise on carbon spins in nitrogen-vacancy centres which is induced each time ones attempts to generate remote entanglement \cite{kalb2018dephasing}.
Another example is the decrease of the probability of extracting the state from an atomic-ensemble based quantum memory \cite{askarani2020frequency}.
Thus, the quality of the produced entanglement is a function of the time its generation takes.
This implies that knowledge of the delivery time is crucial for assessing the viability of schemes for long-distance entanglement distribution using near-term hardware.

Analysis of the delivery time is generally challenging for the entanglement-distribution schemes that are closest to experimental reach because they consist of probabilistic components.
The time such a scheme takes to deliver the entanglement, the completion time, is not a single number but instead a random variable.
For many schemes, the completion time is complex to express due to feedback loops and restarts.
Although numerically, progress has recently been made in determining the completion time for increasingly larger networks \cite{vanmeter2007system, shchukin2017waitingPRA, brand2020efficient, li2020efficient, kuzmin2019scalable, caleffi2017optimal}, numerical approaches provide only limited intuition and moreover are demanding in computation time when performing large-scale optimization over many network designs and hardware parameters.
For this reason, analytical results are more convenient.

Unfortunately, due to the complexity of the problem, even the average completion time is known exactly only in limited cases: for quantum repeater chains consisting of at most four repeater nodes \cite{shchukin2017waitingPRA, vinay2019statistical} and a star network with a single node in the center and an arbitrary number of leaf nodes \cite{vardoyan2019stochastic}.
For larger networks, analytical results only include approximations or loose bounds on the mean entanglement delivery time \cite{khatri2019practical}.
The approximations are based on the assumption that the success probabilities of some of the network components are very small \cite{sangouard2011quantum, kuzmin2020diagrammatic, schmidt2019memory-assisted, collins2007multiplexed} or close to 1 \cite{bernardes2011rate,praxmeyer2013reposition, khatri2019practical}.
Neither approximations are ideal, since some success probabilities can be boosted by techniques such as multiplexing, while others are bounded well below 1 for some setups\cite{calsamiglia2001maximum}.
Indeed, numerics have shown for some of the approximations that they become increasingly bad as the size of the network grows \cite{shchukin2017waitingPRA, brand2020efficient}.
Another scenario in which the completion time probability distribution is brought back to a known form includes the discarding of entanglement \cite{santra2018quantum, chakraborty2019distributed}.
See \cite{azuma2021tools} for a review of the completion time analysis for entanglement distribution schemes.

A canonical use case which has found particularly much application is a symmetric nested repeater scheme \cite{briegel1998quantum,duan2001long} where at each nesting level two entangled pairs of qubits, spanning an equal number of nodes, are connected.
Consequently, the entanglement span doubles at each nesting level.
For this scheme, it was empirically known \cite{jiang2007fast} that for small success probabilities of connecting the pairs, the average time to in-parallel create both required initial pairs at each nesting level is roughly $3/2$ times the average time for a single pair.
This results in an approximation to the average completion time of the repeater scheme which is known as the $3$-over-$2$ formula and has been frequently used since \cite{jiang2007fast, simon2007quantum, brask2008memory,sangouard2007long-distance,simon2007quantum,sangouard2008robust,brask2008memory,sangouard2009quantum,bernardes2011rate,sangouard2011quantum,abruzzo2013quantum,munro2015inside,boone2015entanglement,muralidharan2016optimal,asadi2018quantum,piparo2019quantum,asadi2020longARXIVV1,sharman2020quantum,wu2020nearterm,liorni2020quantum}.
Analytically finding the exact factor, for an arbitrary number of nesting levels and for any value of the success probabilities, has been an open problem for more than ten years \cite{sangouard2011quantum}.

In this work, we provide analytical bounds on the completion time which not only improve significantly upon existing bounds, but also show \textit{how good} some of the previous approximations are because the bounds become exact in the small probability limit.
To be precise, we give analytical bounds on the mean and quantiles of the completion time random variable for entanglement-distributing protocols which are constructed of probabilistic components whose success probability can be bounded by a constant from below.
This includes feedback loops in which failure of one component requires restart of other components, as long as no two components wait for the same other component to finish.
Regarding the symmetric nested repeater protocol, our bounds imply that the 3-over-2 approximation is, in essence, an upper bound to the mean completion time, rigorously rendering analyses based on this approximation pessimistic.
Other protocols we can treat include nested repeater chains with distillation and multipartite-entanglement generation schemes \cite{nickerson2012topological, vardoyan2019stochastic, kuzmin2019scalable}, among others.

This work is organized as follows.
First, in Sec.~\ref{sec:preliminaries} we describe the class of protocols our bounds apply to and introduce concepts from reliability theory we will use in the bounds' derivation.
Sec.~\ref{sec:results} contains our main results: analytical bounds on the mean completion time of such protocols and the tail of its probability distribution.
Next, we obtain improved bounds with respect to existing work by applying these results to two use cases: a nested quantum repeater chain (Sec.~\ref{sec:application-repeater}) and a quantum switch in a star network (Sec.~\ref{sec:application-switch}).
We finish with a discussion in Sec.~\ref{sec:discussion}.

\section{Preliminaries \label{sec:preliminaries}}

\subsection{Protocols}
\label{sec:preliminaries-protocols}

The protocols considered in this work aim to generate bipartite or multipartite entanglement between remote parties.
We will refer to bipartite entanglement as a `link'.
We consider protocols that are constructed from two building blocks: $\generate$ and $\restartuntilsuccess$.
Below, we explain the two building blocks individually, followed by describing how to build protocols from them.

\subsubsection{The $\generate$ building block}
First, by $\generate$ we refer to heralded generation of fresh entanglement, i.e. entanglement between remote nodes that is not produced from existing remote entanglement.
For simplicity, we will assume that the entanglement is bipartite and we will refer to such entanglement as an `elementary link'.
In our model, entanglement generation is performed in discrete attempts of fixed duration, each of which succeeds with a given constant probability $\pgen$ \cite{munro2015inside}.
The success is heralded, i.e. the nodes are aware which attempts fail and which succeed.
The duration of a single attempt equals $L / c$, where $L$ is the distance between the nodes and $c$ is the speed of light in the transmission medium.
We use $L / c$ as the unit of time.
As a consequence, the completion time of entanglement generation, denoted as $\Tgen$, is a discrete random variable following the geometric distribution:
\begin{equation}
	\label{eq:geometric-distribution}
\Pr( \Tgen = t) =
\begin{cases}
	\pgen (1 - \pgen)^{t-1} \mbox{ if $t\geq 1$ is an integer}\\
	0 \mbox{ otherwise.}\\
\end{cases}
.
\end{equation}
We will denote the mean of this distribution by $\mugen = 1 / \pgen$.

We will also consider the exponential distribution, which is the continuous analogue of the geometric distribution and is defined as follows: if $X$ follows the exponential distribution with parameter $\lambda > 0$, then
\begin{equation}
	\label{eq:exponential-distribution}
    \Pr( X > x) = e^{-\lambda x}
\end{equation}
for any real number $x \geq 0$.
For small $\pgen$, the completion time of entanglement generation is sometimes approximated by an exponential random variable $\Tapprox$ with the same mean, which is achieved by setting $\lambda = 1 / \mugen$.

\subsubsection{The $\restartuntilsuccess$ building block}
We introduce the next building block, $\restartuntilsuccess$, by example.
For this, we first describe two operations on existing entanglement: entanglement swapping and entanglement distillation.

By an entanglement swap~\cite{zukowski1993eventready} at node $M$, we refer to the operation which converts two links, one between nodes $A$ and $M$ and one between $M$ and $B$, into a single long-distance link between $A$ and $B$.
We model the entanglement swap as a probabilistic operation; in case the entanglement swap fails, both input links are lost.
By $\swapuntilsuccess$, we refer to the process which performs the following loop: it repeatedly takes two links $A-M$ and $M-B$ as input, followed by performing an entanglement swap on them, while the process only terminates if the entanglement swap was successful.
That is, if the swap failed, then the protocol requires the input links to be regenerated.
This process repeats until the entanglement swap succeeds.
We explicitly do not specify how the input links were produced.
These could each be for example delivered by the $\generate$ block, but they could for instance also the result of a succesful entanglement swap themselves.

We assume that the swap success probability $0 < \pswap \leq 1$ is a constant that is independent of the states upon which the swap acts.
This assumption is valid when the input states to the entanglement swap are Bell-diagonal, i.e. probabilistic mixtures of the four Bell states
\[
    \ket{\Phi^{\pm}} = \frac{\ket{00} \pm \ket{11}}{\sqrt{2}},
    \quad
    \ket{\Psi^{\pm}} = \frac{\ket{01} \pm \ket{10}}{\sqrt{2}}
    .
\]
Such a scenario arises, for example, when all imperfections are modelled as the random application of single-qubit Pauli gates~\cite{nielsen2000quantum}, because these permute the four Bell states.
In particular, each Bell state can be mapped to a single target Bell state, say $\ket{\Psi^+}$, by applying a single-qubit Pauli operator to each of the qubits that remain at node $A$ and $B$.
Since only the qubits at node $M$ are involved in the operation that performs the entanglement swap, the success probability of an entanglement swap for any of the 16 combinations of input states is identical to the success probability in case both input states were $\ket{\Psi^+}$.
We thus see that the success probability in case of Bell-diagonal states is a constant, independently of which scheme is used for performing the entanglement swap.

We model fusion, the generalization of the entanglement swap which converts more than 2 input links to a multipartite entangled state, in similar fashion to the entanglement swap.

Entanglement distillation is the probabilistic conversion of two low-quality links shared between two nodes to a single high-quality link between the same two nodes \cite{bennett1996purification, deutsch1996quantum}.
The success probability of distillation depends on the states of the two links, and is lower bounded by $\frac{1}{2}$ for the schemes considered here.
Similarly to the case of entanglement swapping, the two input links are lost if the distillation step fails.
By $\distilluntilsuccess$ we denote the analog of $\swapuntilsuccess$ where the probabilistic operation is entanglement distillation.

We assume that the durations of the entanglement swap, fusion, and distillation operations are negligible.

In general, we use the term $\restartuntilsuccess$ for an operation which takes entanglement as input, performs a probabilistic operation onto it, and demands the regeneration of the input entanglement in the case of failure.
Its success probability can be a function of properties of the input entanglement, such as its quality or its delivery time, but it may also be a constant.
Thus, $\swapuntilsuccess$ and $\distilluntilsuccess$ are instantiations of $\restartuntilsuccess$ where the probabilistic operation is entanglement swapping and entanglement distillation, respectively.
For clarity, we emphasize that for any $\restartuntilsuccess$ operation, all input entanglement needs to be present before the operation can be performed.

\subsubsection{Building protocols from the two building blocks}
\label{sec:building-protocols-from-building-blocks}
The protocols we consider in this work are composed from heralded entanglement generation and $\restartuntilsuccess$ as subprotocols, with the restriction that the distinct $\restartuntilsuccess$ protocols do not compete for the same resources.
That is, no pair of subprotocols waits for the same link before proceeding.
This corresponds to the protocols where the dependency graph of the inputs and outputs of the subprotocols is a tree.
Consequently, the order in which the various probabilistic operations (such as entanglement swaps) are performed, is fixed.
Fig.~\ref{fig:tree} visualizes this tree structure by showing examples of such protocols (see figure caption for further explanation).

As a concrete example, consider the $\distilluntilsuccess$ protocol on two nodes in fig.~\ref{fig:tree}(b).
The protocol starts with Alice and Bob generating two links in parallel using heralded entanglement generation ($\generate$).
When both links are ready, they perform entanglement distillation, which is a probabilistic operation.
If distillation fails, the two input links are lost.
Consequently, Alice and Bob perform heralded entanglement generation again, after which they attempt entanglement distillation once more.
This procedure is repeated until the distillation operation succeeds.
This example protocol is a specific instance of a $\restartuntilsuccess$ protocol because the protocol (i.e. the sequence: $\generate$ twice in parallel, followed by distillation) is restarted when entanglement distillation fails.
Moreover, it can be used as a subprotocol when, for example, the link it outputs is used as a (partial) input to another operation, such as an entanglement swap (see fig.~\ref{fig:tree}(c) for an example).

Due to the probabilistic nature of $\generate$ and of the restarts, the completion time of a $\restartuntilsuccess$ protocol is a random variable.
Since we defined $\Tgen$, the completion time of $\generate$, as a discrete random variable, so is the completion time of any $\restartuntilsuccess$ protocol in which elementary links are produced using $\generate$.
However, at the start of Sec.~\ref{sec:results}, we will consider a continuous random variable as alternative to $\Tgen$.
In that case, the completion time of $\restartuntilsuccess$ will also be continuous.

\begin{figure*}
    \centering
    \includegraphics[width=1.0\textwidth]{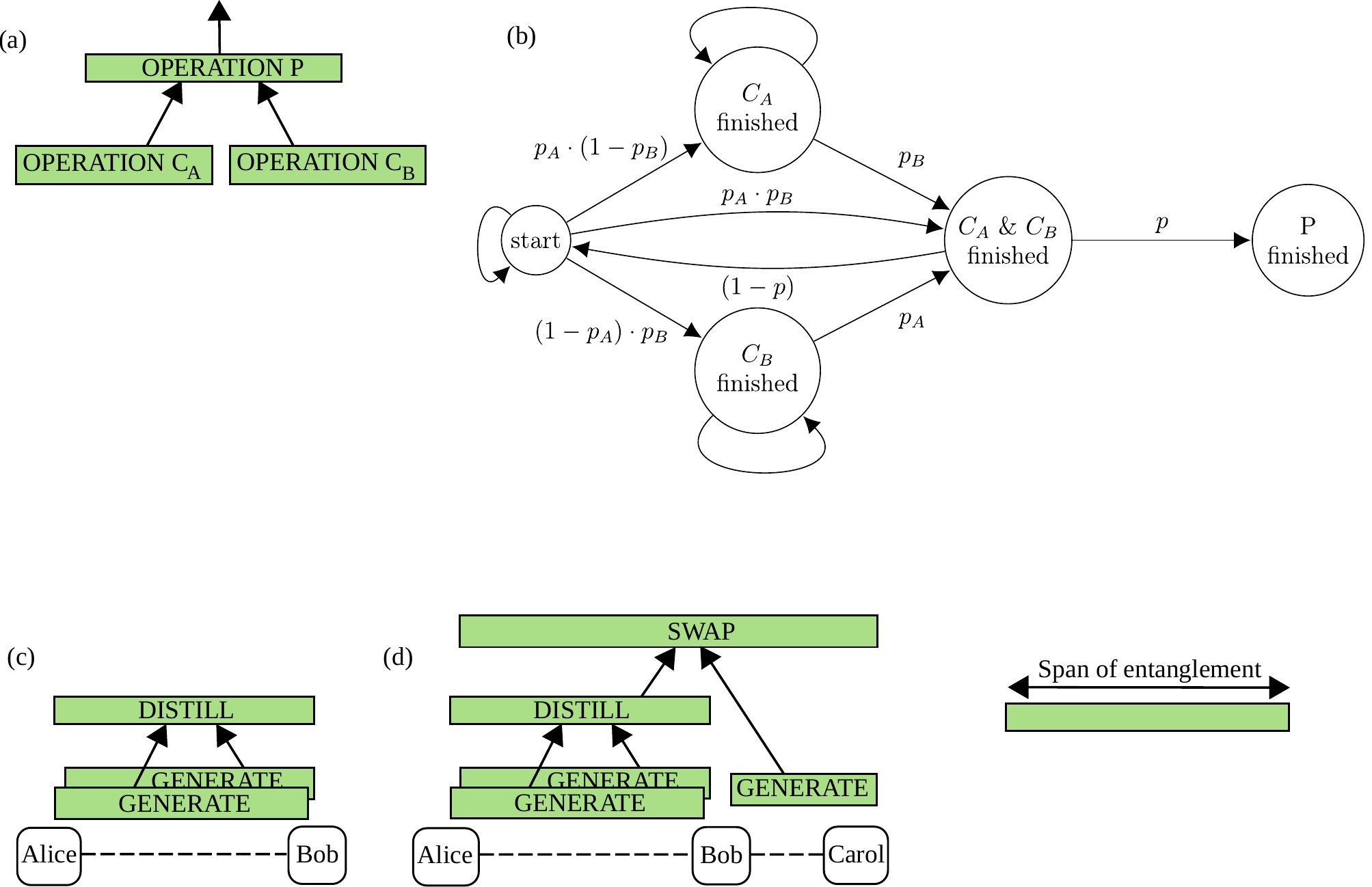}
	\caption{
        The results in this work bound the completion time of any entanglement-distribution protocol which can be visualized as a tree.
        \textbf{(a)}
        In such a tree, each vertex is labelled by the operation ({\sc P}) that should be performed as soon as the operations on the vertex's children ({\sc $\textnormal{C}_{\textnormal{A}}$} and {\sc $\textnormal{C}_{\textnormal{B}}$}) have finished.
        In case the operation fails, both children start regenerating entanglement, possibly by recursively having their children regenerate entanglement.
        This procedure is repeated until the operation {\sc P} succeeds.
        \textbf{(b)} the different states of the tree in (a), and the corresponding transition probabilities (unlabeled transition probabilities left implicit since a node's transition probabilities sum to $1$).
        Here $p_A / p_B / p$ are the success probabilities of the operations $C_A / C_B / P$, which in general need not be constant but may for example depend on the quantum states the the operation acts upon, such as in entanglement distillation.
        We emphasize that ${\sc P}$ will only be attempted once {\sc $\textnormal{C}_{\textnormal{A}}$} and {\sc $\textnormal{C}_{\textnormal{B}}$} have finished, and that moreover if ${\sc P}$ fails, then the process is restarted.
        \textbf{(c)} Example protocol on two nodes, Alice and Bob, which consists of performing heralded entanglement generation ({\sc generate}) twice in parallel, followed by entanglement distillation ({\sc distill}) on the two freshly generated links.
        In case of failure of the distillation attempt, both links are lost, in which case the protocol restarts.
        This procedure is repeated until the distillation attempt succeeds.
        \textbf{(d)} Example protocol on three nodes.
        Alice and Bob perform the protocol from (b), and in parallel Bob and Carol perform heralded entanglement generation.
        As soon as both have finished, Bob performs an entanglement swap
 ({\sc swap}).
        This procedure is repeated until the swap succeeds.
	\label{fig:tree}
	}
\end{figure*}

\subsection{Probability theory and the NBU property}

In this work, we will make extensive use of a class of probability distributions called new-better-than-used (NBU), which have been studied in the context of reliability theory and life distributions~\cite{marshall2007life}.
In order to mathematically define new-better-than-used, we first revisit some notions from probability theory.
All random variables in this work that are continuous have the positive reals as domain, i.e. a continuous random variable $X$ with $\Pr(X < 0) = 0$.
The cumulative distribution function (CDF) of random variable $X$ is $x \mapsto \Pr(X \leq x)$, and the co-CDF is $x \mapsto \Pr(X > x)$.
This co-CDF is also referred to as the survival function or the \textit{reliability}, since it states the probability that $X$ will survive at least up to time $x$.
The residual life distribution of $X$ is given by the conditional probability $\Pr(X > x + y | X > y)$ and describes the time that $X$ will survive at least up another interval $x$ given that it has already survived time $y$.
We now say that a real-valued random variable $X$ is new-better-than-used (NBU) or that it has the NBU property if its residual life distribution is upper bounded by the original reliability, i.e.
\begin{equation}
	\label{eq:nbu-equivalent-def}
	\forall x, y \geq 0: \qquad \Pr(X > x + y | X > y) \leq \Pr(X > x)
	.
\end{equation}
Intuitively, new-better-than-used random variables describe ageing over time. 
As an example, consider the lifetime of a car: the probability that an old car (one that is already $y$ years old) will survive another $x$ years is smaller than the probability that a brand new car will reach the age of $x$ years.

For clarity, we separately state the definition of NBU, where we use an expression equivalent to eq.~\eqref{eq:nbu-equivalent-def} for convenience of our proofs later on.

\begin{df}
	\label{def:nbu}
	A real-valued random variable $X$ with $\Pr(X < 0) = 0$ , is called new-better-than-used (NBU) if
	\[
		\forall x,y\geq 0: \qquad	\Pr(X > x + y) \leq \Pr(X > x) \cdot \Pr(X > y)
		.
		\]
	It is called new-worse-than-used (NWU) if the reverse inequality holds.
\end{df}

We give two examples of NBU distributions.

\begin{example}
	A delta-peak distribution $\Pr(X = x_0) = 1$ for some fixed $x_0 \geq 0$ is NBU, since 
	\begin{eqnarray*}	
	\Pr(X > x) \cdot \Pr(X > y) = 
	  \begin{cases}
		  1  & \text{if } x< x_0 \textnormal{ and } y < x_0\\
		  0 & \text{otherwise}
	  \end{cases}
	\end{eqnarray*}
	while
	\begin{eqnarray*}	
	\Pr(X > x + y) =
	  \begin{cases}
		  1  & \text{if } x + y < x_0\\
		  0 & \text{otherwise.}
	  \end{cases}
	\end{eqnarray*}
	Since $x + y < x_0$ implies $x < x_0$ and $y < x_0$ for any $x,y \geq 0$, we see that $\Pr(X > x + y) \leq \Pr(X > x) \cdot \Pr(X > y)$ and thus $X$ is NBU.
\end{example}

\begin{example}
	\label{example:exponential-distribution}
    The exponential distribution, defined in eq.~\eqref{eq:exponential-distribution}, satisfies $\Pr(X > x + y ) = \Pr(X > x) \cdot \Pr(X > y)$ for all $x, y \geq 0$ and is therefore both NBU and NWU.
\end{example}

Lastly, we will use the notion of stochastic dominance.

\begin{df}
    \label{def:stochastic-dominance}
    Let $X$ and $Y$ be two random variables with common domain $D$, a subset of the real numbers.
    We say that $X$ stochastically dominates $Y$ and write $X \stgeq Y$ if
    \[
    \Pr(X > z) \geq \Pr(Y > z)
    \]
    for all $z \in D$.
\end{df}

In particular, we will use the following lemma, which states that stochastic dominance of one random variable over the other implies an ordering of their means.

\begin{lemma}
    \label{lemma:stochastic-dominance-means}
    Let $X$ and $Y$ be two random variables with domain $[0, \infty)$.
If $X \stgeq Y$, then $E[X] \geq E[Y]$.
\end{lemma}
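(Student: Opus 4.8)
The plan is to express each mean as the integral of its tail (co-CDF) and then compare the two integrands pointwise using the stochastic dominance hypothesis. Since both $X$ and $Y$ are supported on $[0,\infty)$, the standard layer-cake identity gives
\[
E[X] = \int_0^\infty \Pr(X > z)\, dz, \qquad E[Y] = \int_0^\infty \Pr(Y > z)\, dz,
\]
where both integrals take values in $[0,\infty]$. First I would justify this representation: writing the pointwise identity $X = \int_0^\infty \mathbf{1}[X > z]\, dz$ and taking expectations, the order of expectation and integration may be exchanged by Tonelli's theorem, since the integrand $\mathbf{1}[X > z]$ is nonnegative and jointly measurable in $(\omega, z)$. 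A convenient feature of this identity is that it is valid uniformly whether $X$ is discrete (as for the geometric $\Tgen$) or continuous (as for the exponential $\Tapprox$), and it accommodates the possibility of infinite means without extra bookkeeping.

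Next I would invoke the hypothesis $X \stgeq Y$. By Definition~\ref{def:stochastic-dominance}, and because here $D_X \cap D_Y = [0,\infty)$, we have $\Pr(X > z) \geq \Pr(Y > z)$ for every $z \geq 0$. Integrating this pointwise inequality over $z \in [0,\infty)$ and using monotonicity of the integral then yields
\[
E[X] = \int_0^\infty \Pr(X > z)\, dz \;\geq\; \int_0^\infty \Pr(Y > z)\, dz = E[Y],
\]
which is exactly the claim.

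The only real subtlety — and it is scarcely an obstacle — is making the tail formula rigorous simultaneously for the discrete and continuous distributions that appear throughout this paper. I would dispatch this uniformly through the indicator representation and Tonelli, rather than splitting into separate discrete and continuous cases; everything else then follows immediately from the monotonicity of integration applied to the pointwise tail bound supplied by stochastic dominance.
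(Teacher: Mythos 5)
Your proof is correct and follows essentially the same route as the paper's: both represent each mean as the integral of the co-CDF, $E[X] = \int_0^\infty \Pr(X > z)\,dz$, and then integrate the pointwise tail inequality supplied by stochastic dominance. Your additional justification of the layer-cake identity via the indicator representation and Tonelli's theorem is a fuller account of a step the paper simply cites as a known fact.
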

\begin{proof}
    The lemma directly follows from the definition of stochastic dominance, together with the fact that the mean of $X$ can be written as an integral over the co-CDF,
    \[ E[X] = \int_{0}^{\infty} \Pr(X > x) dx
    ,
    \]
    and similarly for $Y$.
\end{proof}

\section{Main results\label{sec:results}}

In this section, we give our main results in Prop.~\ref{prop:swap-bounds} and \ref{prop:swap-bounds-non-iid}: bounds on the completion time distribution for protocols composed of elementary-link generation ($\generate$) and $\restartuntilsuccess$ operations.
The proofs to the main results can be found in Sec.~\ref{sec:methods}.

Our results bound continuous completion times, whereas the completion time of elementary-link generation is the discrete random variable $\Tgen$ (see Sec.~\ref{sec:preliminaries}).
Therefore, before stating our main results we first remark that $\Tgen$ is stochastically dominated by a continuous NBU random variable we denote as $\Tupper$.

\begin{lemma}
	\label{lemma:generation-majorization}
    The completion time $\Tgen$ of elementary-link generation is stochastically dominated (Def.~\ref{def:stochastic-dominance}) by the continuous random variable $\Tupper = 1+\Texp$ where $\Texp$ is exponentially distributed with parameter $\frac{-1}{\log(1 - \pgen)}$. That is,
    \begin{eqnarray*}
        \Pr(\Tgen > t) &\leq& \Pr(\Tupper > t) \\
        &=&
        \begin{cases}
            1 & \textnormal{if } 0 \leq t \leq 1 \\
            \exp\left(
            (t-1)/\log(1 - \pgen)
            \right)
            & \textnormal{if } t \geq 1
        \end{cases}
    \end{eqnarray*}
    The mean of $\Tgen$ is upper bounded by the mean of $\Tupper$ which is given by
    \begin{equation}
        \label{eq:mugenupper}
            \mugenupper = 1- \frac{1}{\log(1 - \pgen)} = \frac{1}{\pgen} + \frac{1}{2} + O(\pgen)
     \end{equation}
    where $O(\pgen)$ contains terms that scale with $\pgen$ or powers of it.
    The means of $\Tgen$ and $\Tupper$ differ only slightly, both in difference and in ratio:
    \begin{equation}
        \label{eq:mu-difference-bound}
        0 \leq \mugenupper - \mugen \leq \frac{1}{2}
        \text{ and }
        1 \leq \frac{\mugenupper}{\mugen} \leq 1 + \frac{\pgen}{2}
    \end{equation}
        for any $\pgen\in [0, 1]$.
		Moreover, $\Tupper$ is NBU.
\end{lemma}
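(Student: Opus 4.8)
The plan is to establish the four assertions — stochastic dominance of $\Tgen$ by $\Tupper$, the closed form and expansion of $\mugenupper$, the difference and ratio bounds, and the NBU property — one at a time, since each is essentially independent of the others.

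First, for the stochastic dominance I would write both co-CDFs explicitly. Because $\Tgen$ is integer-valued geometric, $\Pr(\Tgen > t) = (1-\pgen)^{\lfloor t\rfloor}$ for every $t \geq 0$, whereas the co-CDF of $\Tupper = 1 + \Texp$ equals $1$ on $[0,1]$ and $\exp\!\big((t-1)\log(1-\pgen)\big) = (1-\pgen)^{t-1}$ for $t \geq 1$. On $[0,1]$ the bound is immediate since the right-hand side is $1$. For $t \geq 1$ the claim $(1-\pgen)^{\lfloor t\rfloor} \leq (1-\pgen)^{t-1}$ follows after noting that $x \mapsto (1-\pgen)^x$ is decreasing (as $1-\pgen \in (0,1)$) and that $\lfloor t\rfloor \geq t-1$ holds for all real $t$. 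This step is only careful bookkeeping with the floor.

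Next, for the mean I would invoke the integral representation $E[\Tupper] = \int_0^\infty \Pr(\Tupper > t)\,dt$ from Lemma~\ref{lemma:stochastic-dominance-means}, split the integral at $t=1$, and observe that the flat part contributes $1$ while the exponential tail integrates to $-1/\log(1-\pgen)$, yielding $\mugenupper = 1 - 1/\log(1-\pgen)$. The expansion $\mugenupper = 1/\pgen + 1/2 + O(\pgen)$ then drops out of the series $\log(1-\pgen) = -\pgen - \pgen^2/2 - \cdots$ together with a geometric expansion of its reciprocal. The lower bounds $\mugenupper - \mugen \geq 0$ and $\mugenupper/\mugen \geq 1$ require no computation: they follow from the stochastic dominance just proved via Lemma~\ref{lemma:stochastic-dominance-means}.

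The upper bound $\mugenupper - \mugen \leq 1/2$ is, I expect, the main obstacle, as it is the one genuinely analytic inequality. Substituting $\mugen = 1/\pgen$, it is equivalent to $2\pgen \leq (2-\pgen)\big(-\log(1-\pgen)\big)$, which I would prove by setting $g(\pgen) = (2-\pgen)\big(-\log(1-\pgen)\big) - 2\pgen$, checking $g(0)=0$, and showing $g'\geq 0$ on $[0,1)$; with $q=1-\pgen$ the derivative simplifies to $\log q + 1/q - 1$, which is nonnegative on $(0,1]$ because it vanishes at $q=1$ and is decreasing there (its derivative is $(q-1)/q^2 \leq 0$). The ratio bound $\mugenupper/\mugen \leq 1 + \pgen/2$ then needs no separate argument, since $\mugen(1+\pgen/2) = 1/\pgen + 1/2$ makes it algebraically identical to the difference bound. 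Finally, for the NBU property I would verify $\Pr(\Tupper > x+y) \leq \Pr(\Tupper > x)\,\Pr(\Tupper > y)$ directly by casework on which of $x$, $y$, $x+y$ exceed $1$; in every case each reliability is either $1$ or a power of $1-\pgen$, and the inequality collapses to $(1-\pgen)^{a}\leq 1$ for some $a\geq 0$. One could instead note that $\Tupper$ is the independent sum of the constant $1$ and an exponential, both NBU by the delta-peak example and Example~\ref{example:exponential-distribution}, and appeal to closure of NBU under convolution; but since that closure is not established in the text above, the direct casework is the safer route.
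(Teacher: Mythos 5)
Your proof is correct, and on two of the four parts it coincides with the paper's: the dominance step is the identical computation (the floor estimate $\lfloor t\rfloor \geq t-1$ combined with the decreasing base $1-\pgen$), and the ratio bound is obtained in both by algebraically reducing it to the difference bound. You diverge on the two genuinely analytic steps. For $\mugenupper - \mugen \leq \tfrac{1}{2}$, the paper differentiates the gap as a function of $\pgen$, shows the derivative $\tfrac{1}{\pgen^2} - \tfrac{1}{(1-\pgen)\log^2(1-\pgen)}$ is nonpositive by invoking a cited logarithm inequality $\log x \geq (x-1)/\sqrt{x}$, and reads off \emph{both} bounds from monotonicity between the endpoint values $\tfrac{1}{2}$ (at $\pgen \downarrow 0$, which needs the series expansion) and $0$ (at $\pgen \uparrow 1$); you instead rearrange to the single inequality $2\pgen \leq (2-\pgen)\bigl(-\log(1-\pgen)\bigr)$ and verify it from scratch via $g(0)=0$ and the substitution $q=1-\pgen$ giving $g' = \log q + 1/q - 1 \geq 0$, while getting the lower bound for free from stochastic dominance. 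Your route is more self-contained — no external inequality to cite, and the expansion is not needed for the bound itself — at the cost of not exhibiting the monotone behavior of the gap, which the paper's argument yields as a by-product. For the NBU property, the paper splits into the cases $x,y<1$ versus (WLOG) $y\geq 1$ and leans on the NBU property of $\Texp$ (Example~\ref{example:exponential-distribution}) together with $\Pr(1+\Texp>y)\geq\Pr(\Texp>y)$; your fully explicit casework on which of $x$, $y$, $x+y$ exceed $1$ is equally valid — the four cases reduce to $(1-\pgen)^a\leq 1$ with $a = x+y-1$, $a = y$, or $a = 1$ — and is if anything more elementary, and you were right to avoid the convolution-closure shortcut, since closure of NBU under convolution is nowhere established in the paper. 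One cosmetic point: the representation $E[X]=\int_0^\infty \Pr(X>x)\,dx$ appears inside the \emph{proof} of Lemma~\ref{lemma:stochastic-dominance-means}, not as its statement, so it should be cited as a standard fact rather than as that lemma.
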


\begin{figure}
\centering

\begin{subfigure}{0.5\textwidth}
\centering
    \includegraphics[width=\columnwidth]{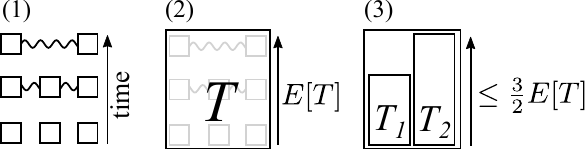}
    \caption{
    \label{fig:results-mean}
}
\end{subfigure}

\vspace{4mm}

\begin{subfigure}{0.5\textwidth}
\centering
    \includegraphics[width=0.70\columnwidth]{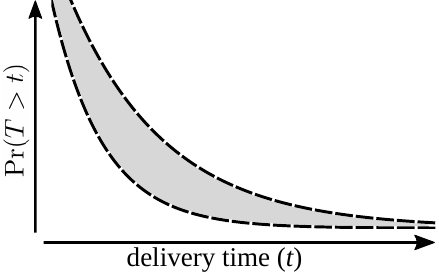}
    \caption{
    \label{fig:results-tail}
}
\end{subfigure}

\caption{\label{fig:results}
    Visual overview of this work's bounds on the completion time of entanglement distribution protocols.
    \textbf{(Top)} The first result is a bound on the mean completion time of two parallel entanglement distribution processes.
    To be precise, we consider an entanglement distribution process (1), whose completion time is a random variable $T$ and has mean $E[T]$ (2).
    If $T$ is NBU (Def.~\ref{def:nbu}), we show that completing two such independent and identically distributed processes in parallel has a mean time which is bounded from above by $\frac{3}{2}\cdot E[T]$ (3).
    \textbf{(Bottom)} Our second result is a two-sided bound on the probability distribution of the completion time of such processes.
    These bounds decay exponentially fast.
    }
\end{figure}

As consequence of Lemma~\ref{lemma:generation-majorization}, we may assume that the duration of elementary-link generation is described by $\Tupper$ if we are looking for upper bounds on a protocol's completion time.
Indeed, an upper bound on the co-CDF or the mean of the resulting completion time will automatically also become an upper bound on the real completion time (see Def.~\ref{def:stochastic-dominance} and Lemma~\ref{lemma:stochastic-dominance-means}).

Now let us state our bounds on continuous completion times.
For legibility, we first state a special case of our main result: the scenario where a $\swapuntilsuccess$ operation with constant success probability is performed on two quantum states.
We assume that the time it takes until a state is produced is a random variable, and that this random variable is the same for both input states; that is, their completion times are independent and identically distributed.

\begin{center}
\fbox{
\begin{minipage}{0.4\textwidth}
	\begin{center}
		\textbf{Completion time of swapping: two states \& IID}\\
	\end{center}
\begin{prop}

	\label{prop:swap-bounds}
	Consider the time $\Tafter$ of a $\swapuntilsuccess$ protocol with constant success probability $p$, acting on two quantum states, produced with identically-distributed independent completion times $\Tbefore$.
	If $\Tbefore$ is a continuous random variable and it is NBU (Def.~\ref{def:nbu}), then:
	\begin{enumerate}[(a)]
		\item $\Tafter$ is NBU;
		\item the mean of $\Tafter$ is upper bounded as 
            \[E[\Tafter] \leq  \frac{3E[\Tbefore]}{2p} ;\]
		\item for all $t$, the probability that $\Tafter$ takes longer than $t$ timesteps decays exponentially fast:
			\[\Pr(\Tafter > t) \leq  \exp\left(p - \frac{2 p t}{3 E[\Tbefore]}\right)\]
			while it is lower bounded as
			\[\Pr(\Tafter > t) \geq \exp\left(\frac{-2 p t}{3E[\Tbefore]} \cdot \frac{1}{1-p}\right).\]
        \item in the limit $p\rightarrow 0$, the normalized completion time $\Tafter / \mean{\Tafter}$ approaches the exponential distribution with mean 1, and thus $E[\Tafter] \cdot 2p/(3E[\Tbefore]) \rightarrow 1$.
	\end{enumerate}
\end{prop}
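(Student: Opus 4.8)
The plan is to recognize $\Tafter$ as a \emph{geometric compound} and reduce every claim to a property of that sum. Let $N$ be the number of swap attempts until the first success; since the swap coin is independent of the generation times, $N$ is geometric with parameter $p$ and
\[
\Tafter \;\stackrel{d}{=}\; \sum_{i=1}^{N} M_i, \qquad M_i \stackrel{d}{=} M := \max(T_1,T_2),
\]
with $T_1,T_2$ independent copies of $\Tbefore$ and the $M_i$ i.i.d.\ and independent of $N$. The first ingredient I would prove is that $M$ inherits the NBU property: writing $\bar F(t)=\Pr(\Tbefore>t)$, one has $\Pr(M>t)=2\bar F(t)-\bar F(t)^2$, and because $s\mapsto 2s-s^2$ is increasing on $[0,1]$ while $\bar F(x+y)\le \bar F(x)\bar F(y)$, a one-line expansion gives
\[
\Pr(M>x)\Pr(M>y)-\Pr(M>x+y)\ \ge\ 2\,\bar F(x)\bar F(y)\bigl(1-\bar F(x)\bigr)\bigl(1-\bar F(y)\bigr)\ \ge\ 0,
\]
so $M$ is NBU (Def.~\ref{def:nbu}).

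For the mean bound (b) I would apply Wald's identity, $\mean{\Tafter}=\mean{N}\,\mean{M}=\frac1p\,\mean{\max(T_1,T_2)}$, and then extract the ``$3/2$''. Since $\mean{\max(T_1,T_2)}=2\mean{\Tbefore}-\mean{\min(T_1,T_2)}$ and $\mean{\min(T_1,T_2)}=\int_0^\infty \bar F(t)^2\,dt$, the NBU inequality taken at $x=y=t$ (i.e.\ $\bar F(t)^2\ge\bar F(2t)$) yields $\mean{\min(T_1,T_2)}\ge\int_0^\infty\bar F(2t)\,dt=\tfrac12\mean{\Tbefore}$, hence $\mean{\max(T_1,T_2)}\le\tfrac32\mean{\Tbefore}$ and $\mean{\Tafter}\le \tfrac{3\mean{\Tbefore}}{2p}$. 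Equality holds precisely for the exponential, which already foreshadows (d).

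Claim (a), that $\Tafter$ is itself NBU, is the part I expect to be the main obstacle: NBU is closed under convolution but \emph{not} under mixtures, and a geometric sum is a mixture of convolutions. The route I would take is the distributional recursion $\Tafter \stackrel{d}{=} M + B\cdot\Tafter'$, where $B$ is Bernoulli$(1-p)$ and $\Tafter'$ is an independent copy, together with a coupling argument: conditioning on $\{\Tafter>y\}$, the residual time splits into the residual of the round in progress plus, if that round's swap fails, a fresh geometric compound. The residual of the in-progress round is stochastically dominated by a fresh $M$ precisely because $M$ is NBU, while the count of remaining rounds is again geometric because the geometric law is memoryless and the in-progress swap coin has not yet been revealed; monotonicity of the recursion then propagates the NBU property from $M$ to $\Tafter$. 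The delicate point is handling the random round index and the random elapsed time within that round rigorously, and it is exactly the memorylessness of the geometric count that compensates for mixtures not preserving NBU in general.

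For the tail bounds (c), the upper bound follows from (a): NBU makes $t\mapsto -\log\Pr(\Tafter>t)$ superadditive, and combining this superadditivity with the mean bound from (b) produces the decay $\exp(p-\tfrac{2pt}{3\mean{\Tbefore}})$, the prefactor $e^{p}$ absorbing the behaviour near $t=0$. The matching lower bound is the second delicate piece: a geometric sum is ``nearly memoryless,'' so I would bound its reliability from below by comparison with an exponential whose rate is controlled through (b), leaning on the memoryless structure of $N$ rather than on the NBU property of $M$. Finally, for (d), the convergence of $\Tafter/\mean{\Tafter}$ to the unit-mean exponential is a R\'enyi-type limit theorem for geometric sums (as $p\to0$ a geometric sum, normalized by its mean, converges to the exponential with mean $1$); the companion statement $\mean{\Tafter}\cdot 2p/(3\mean{\Tbefore})\to1$ I would obtain by squeezing, with the upper estimate coming from (b) and the lower estimate from integrating the lower tail bound of (c) over $t\in[0,\infty)$ (cf.\ Lemma~\ref{lemma:stochastic-dominance-means}), the two meeting as $p\to0$.
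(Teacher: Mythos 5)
Your skeleton is the paper's skeleton: the paper also writes $\Tafter=\sum_{k=1}^{K}M^{(k)}$ with $M=\max(T_1,T_2)$ and $K$ geometric, gets the mean from Wald's identity, proves closure of NBU under the maximum (Appendix~A --- your one-line factorization, which amounts to
\begin{equation*}
(2a-a^2)(2b-b^2)-(2ab-a^2b^2)=2ab(1-a)(1-b)\ \ge\ 0 ,
\end{equation*}
is a correct and more compact version of the paper's case analysis), and your step $\mean{\min(T_1,T_2)}\ge\frac12\mean{\Tbefore}$ via $\bar F(t)^2\ge\bar F(2t)$ is exactly the paper's Lemma~4 (Hu--Lin) specialized to $n=2$. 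For the one point you call the main obstacle --- NBU closure under geometric compounding --- the paper gives no proof at all: it cites Brown (1990, Sec.~3.2). Your renewal-coupling sketch is essentially the right proof of Brown's theorem (condition on the full history up to the start of the in-progress round; the residual of that round is dominated by a fresh $M$ by NBU of $M$, the unrevealed coin plus memorylessness of $K$ regenerates the compound, and domination by the \emph{fixed} law of $\Tafter$ survives averaging over the biased configuration), but as written it remains a sketch, with exactly the delicate averaging step you flag left open.

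The genuine gap is in (c), and it propagates to (d). Superadditivity of $t\mapsto-\log\Pr(\Tafter>t)$ together with the mean bound cannot produce the stated prefactor $e^{p}$: from NBU-ness and a mean $\mu$ alone, the best bounds of this shape have a $p$-\emph{independent} prefactor (e.g.\ combining $\Pr(\Tafter>nt)\le\Pr(\Tafter>t)^n$ with Markov at an optimized $t$ only yields rate $1/(e\mu)$, and even the sharpest generic NBU tail bound costs a factor $e^{1}$, not $e^{p}$). The bound $\Pr(\Tafter>t)\le e^{p}e^{-t/\mean{\Tafter}}$, i.e.\ $e^{-(t-m)/\mean{\Tafter}}$ with $m=\mean{M}$, is a sharp exponential-approximation error bound \emph{specific to geometric convolutions}; the paper imports it verbatim from Brown, eq.~3.2.4 (its Lemma~3). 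The lower bound likewise: the paper gets $\Pr(\Tafter>t)\ge e^{-t/\mean{Y_0}}$ with $\mean{Y_0}=(1-p)m/p$ from Brown's companion fact that the shifted compound $Y_0=\sum_{k=1}^{K-1}M^{(k)}$ is always NWU and is stochastically dominated by $\Tafter$; your ``nearly memoryless, compare with an exponential'' plan is not a derivation. Two consequences. First, your squeeze in (d) needs precisely $e^{p}\to1$; with a prefactor $e^{1}$ the two sides of (c) never meet. Second, a direction subtlety you inherit from taking the proposition's displayed lower bound at face value: Brown's lower bound carries $m$ in the denominator, and since $m\le\frac32\mean{\Tbefore}$, replacing $m$ by $\frac32\mean{\Tbefore}$ \emph{weakens the exponent the wrong way}; the $m$-form (the paper's Prop.~3(c)) is what is actually provable. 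Relatedly, your own Wald identity gives $\mean{\Tafter}\cdot 2p/(3\mean{\Tbefore})=2m/(3\mean{\Tbefore})$, a constant in $p$, so the second claim of (d) is really the statement that $m=\frac32\mean{\Tbefore}$ in the relevant (exponential-like) regime --- integrating tail bounds is not the clean route here. Your appeal to the R\'enyi limit theorem for the distributional part of (d) is, by contrast, a legitimate alternative to the paper's squeeze-from-(c), provided it is invoked as a citation rather than derived from the flawed tail bounds.
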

\end{minipage}
}
	\vspace{\baselineskip}
\end{center}

The bounds from Prop.~\ref{prop:swap-bounds} are visually depicted in Fig.~\ref{fig:results}.

Although Prop.~\ref{prop:swap-bounds} regards a $\swapuntilsuccess$ protocol, it also finds application to $\distilluntilsuccess$, which has nonconstant success probability:

\begin{remark}
	\label{remark:distillation}
	Consider Prop.~\ref{prop:swap-bounds} where $\swapuntilsuccess$ is replaced by $\distilluntilsuccess$.
	Note:
	\begin{enumerate}[(a)]
		\item{
			Prop~\ref{prop:swap-bounds}(a)-(c) still hold in case the quantum states produced with completion times $\Tbefore$ do not decohere over time, because then the distillation success probability $p$ is a constant, independent of the production times of the input states;
				}
	\end{enumerate}
				The success probability of distillation is general lower bounded by $1/2$, resulting in
    \begin{enumerate}[(a)]
            \setcounter{enumi}{1}
		\item{ 
			$E[\Tafter] \leq 3 E[\Tbefore]$.
				}
    \end{enumerate}
    Since the upper bound in Prop~\ref{prop:swap-bounds}(c) is monotonically decreasing in $p$ in the regime $t \geq 3E[\Tbefore] / 2$, we may replace $p$ by its lower bound $1/2$ to obtain:
    \begin{enumerate}[(a)]
            \setcounter{enumi}{2}
            \item{for $t \geq 3E[\Tbefore] / 2$, we have
                \begin{equation*}
                    \Pr(\Tafter > t) \leq \exp(\frac{1}{2}-\frac{t}{3E[\Tbefore]})
                    .
                \end{equation*}
				}
	\end{enumerate}
\end{remark}

Prop.~\ref{prop:swap-bounds} is a special case of a more general version of Prop.~\ref{prop:swap-bounds-non-iid} for $\restartuntilsuccess$ protocols that act on two or more quantum states whose completion times are independent, but not necessarily identically distributed.

\begin{center}
\fbox{
\begin{minipage}{0.4\textwidth}
	\begin{center}
        \textbf{General case: completion time of $\restartuntilsuccess$ protocol}\\
	\end{center}
\begin{prop}

	\label{prop:swap-bounds-non-iid}
	Consider the time $\Tafter$ of a $\restartuntilsuccess$ protocol with constant success probability $p$, acting on $n\geq 2$ quantum states, produced with independent completion times $T_1, \dots, T_n$, which need not be identically distributed.
	Suppose that each of $\Tafter$ and $T_1, \dots, T_n$ is a continuous random variable.
	Denote \mbox{$m = E[\max(T_1, \dots, T_n)]$}.
	If all $T_1, \dots, T_n$ are NBU (Def.~\ref{def:nbu}), then:
	\begin{enumerate}[(a)]
		\item $\Tafter$ is NBU;
        \item the mean of $\Tafter$ equals \mbox{$E[\Tafter] = m / p$};
		\item for all $t$, the probability that $\Tafter$ takes longer than $t$ timesteps is exponentially bounded from above as
			\[
			\Pr(\Tafter > t)
			\leq \exp\left(p - \frac{p \cdot t}{m}\right)
			.\]
			while it is bounded from below by
			\[
			\Pr(\Tafter > t)
			\geq
				\exp\left(\frac{-p\cdot t}{m} \cdot \frac{1}{1 - p}\right)
				.
				\]
		\item in the limit $p\rightarrow 0$, the normalized completion time $\Tafter / \mean{\Tafter}$ approaches the exponential distribution with mean 1, and thus $E[\Tafter] \cdot p/m \rightarrow 1$.
	\end{enumerate}
	\begin{enumerate}[(e)]
		\item{
			For general $n$, we have
	\[
		\max_{1\leq j \leq n} E[T_j] \leq m \leq \sum_{j=1}^n E[T_j]
            .
		\]
    }
	\end{enumerate}
	\begin{enumerate}[(f)]
        \item{
            If $n=2$, then we also have
            \begin{eqnarray*}
                m \leq &&\frac{3}{4} \cdot \left(E[T_1] + E[T_2]\right) \\
                &&+ \int_{0}^{\infty} \left[ \Pr(T_1 > t) - \Pr(T_2 > t)\right]^2 dt
                .
            \end{eqnarray*}
			}
	\end{enumerate}
	\begin{enumerate}[(g)]
		\item{
                In case all $T_j$ are identically distributed with mean $E[T]$, then a tighter bound than (e) exists:
	\[
		1 \leq
			\frac{m}{E[T]} \leq n - 1 + \frac{1}{n}
		.
		\]
			}
	\end{enumerate}
\end{prop}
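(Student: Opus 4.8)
My plan is to treat the two inequalities separately. The lower bound $m/E[T]\ge 1$ is immediate: part~(e) already gives $\max_{j}E[T_j]\le m$, and since all $T_j$ share the mean $E[T]$ this reads $E[T]\le m$. (Equivalently, $\max(T_1,\dots,T_n)\stgeq T_1$ pointwise, so Lemma~\ref{lemma:stochastic-dominance-means} applies directly.)

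For the upper bound the key idea I would use is a \emph{pathwise} decomposition of the maximum through the minimum. Writing $W=\max(T_1,\dots,T_n)$ and $V=\min(T_1,\dots,T_n)$, the difference $W-V$ is one of the nonnegative terms $T_i-V$ and is therefore bounded by their sum, giving the deterministic inequality
\[
W - V \;\le\; \sum_{i=1}^n (T_i - V) \;=\; \sum_{i=1}^n T_i - n\,V,
\qquad\text{i.e.}\qquad
W \;\le\; \sum_{i=1}^n T_i - (n-1)\,V .
\]
Taking expectations and using linearity together with $E[T_i]=E[T]$ turns this into $m \le n\,E[T] - (n-1)\,E[V]$, so everything reduces to lower-bounding the expected minimum $E[V]$.

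This last step is where the NBU hypothesis and the independence of the $T_j$ enter, and I expect it to be the crux. Writing $\bar F(t)=\Pr(T_1 > t)$ for the common reliability, independence gives $\Pr(V>t)=\bar F(t)^n$. Iterating the NBU inequality of Def.~\ref{def:nbu} $n-1$ times, I would show $\bar F(nt)=\Pr\!\big(T_1 > t+(n-1)t\big)\le \bar F(t)\,\bar F((n-1)t)\le\cdots\le \bar F(t)^n$, hence $\bar F(t)^n\ge \bar F(nt)$ for all $t\ge0$. Integrating and substituting $s=nt$ then yields
\[
E[V]=\int_0^\infty \bar F(t)^n\,dt \;\ge\; \int_0^\infty \bar F(nt)\,dt \;=\; \frac1n\int_0^\infty \bar F(s)\,ds \;=\; \frac{E[T]}{n}.
\]
Substituting back gives $m \le n\,E[T]-(n-1)\tfrac{E[T]}{n}=\big(n-1+\tfrac1n\big)E[T]$, which is the claim after dividing by $E[T]$. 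The only genuinely nontrivial manoeuvres are spotting the sum-minus-min pathwise bound and converting the one-step NBU inequality into the multiplicative estimate $\bar F(t)^n \ge \bar F(nt)$; both the expectation bookkeeping and the change of variables are routine.
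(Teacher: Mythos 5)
Your proof of part (f) is correct and follows essentially the same route as the paper: your pathwise bound $W \le \sum_{i} T_i - (n-1)V$ is exactly the paper's inequality $t_1+\dots+t_n \ge \max(t_1,\dots,t_n) + (n-1)\min(t_1,\dots,t_n)$, and your iterated-NBU estimate $\bar F(t)^n \ge \bar F(nt)$ followed by integration and the substitution $s=nt$ is precisely how the paper establishes $E[\min(T_1,\dots,T_n)] \ge E[T]/n$ (stated there as Lemma~\ref{lemma:nbu-min-bound}, attributed to Hu and Lin but proved in-line in identical fashion). Your handling of the lower bound via part (e) likewise matches the paper.
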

\end{minipage}
}
	\vspace{\baselineskip}
\end{center}

We finish this section by generalizing Remark~\ref{remark:distillation}.

\begin{remark}
    \label{remark:bounded-success-probability}
Consider a {\restartuntilsuccess} protocol whose success probability is lower bounded by a constant $c$.
Then the upper bounds in Prop.~\ref{prop:swap-bounds-non-iid}(e-g) still hold, while 
    Prop.~\ref{prop:swap-bounds-non-iid}(b) and (c) can respectively be replaced by \mbox{$E[\Tafter] \leq m/c$} and \mbox{$\Pr(\Tafter > t) \leq \exp(c-\frac{ct}{m})$} for \mbox{$t \geq m$}.
\end{remark}

In the next sections, we give two use cases for the bounds derived in this section: a quantum repeater chain scheme and a quantum switch protocol.

\section{First application: nested quantum repeater chain \label{sec:application-repeater}}

In this section, we apply our bounds on the completion time of entanglement distribution protocols to an extensively-studied nested repeater chain protocol \cite{briegel1998quantum, duan2001long}.
We explain the protocol for the case where the number of segments is $2^n$ for some integer $n \geq 0$ (i.e. the chain consists of $2^{n}+1$ nodes).
See also Fig.~\ref{fig:schematic-nested-repeater}.
If $n=0$, then the network consists of two end nodes only (no repeaters), which use heralded entanglement generation (see Sec.~\ref{sec:preliminaries}) to generate a single elementary link.
If $n>0$, then the chain has a middle node (since the number of segments is even).
In parallel, a $2^{n-1}$-hop-spanning link is produced on the left side of the middle node, as well as a link on its right side.
As soon as both links have been prepared, the middle node performs an entanglement swap to convert the two links into a single $2^n$-hop-spanning link.
This scheme can also be extended with one or multiple rounds of entanglement distillation at each nesting level, in a nested fashion \cite{briegel1998quantum}.

The exact completion time distribution of the nested repeater scheme has so far not been analytically found beyond the single-repeater case.
The problem was first fully explained by Sangouard et al. \cite{sangouard2011quantum}, although it was already partially described in earlier work~\cite{jiang2007fast, simon2007quantum, brask2008memory}.
Sangouard et al., remarked that while the completion time of elementary-link generation at the bottom level follows a well-known distribution (the geometric distribution, Sec.~\ref{sec:preliminaries}), this is no longer the case for higher levels.

To circumvent this issue, many have resorted to approximating the probability distribution at each level with an exponential distribution, combined with the small-probability assumptions $\pswap \ll 1$ and $\pgen \ll 1$.
This approximation leads to an expression for the mean entanglement delivery time as follows.  At each nesting level, the protocol can only continue if both input states to the entanglement swap have been produced.
Mathematically, this is expressed as the maximum of the delivery time of the two links.
The mean of the maximum of two independent and identically distributed (i.i.d.) exponential random variables with mean $\mu$ is $\frac{3}{2} \cdot \mu$.
Next, if the swap success probability is $\pswap$, then on average $1/\pswap$ attempts are needed until success.
Thus, for each nesting level, the mean entanglement delivery time should be multiplied by a factor $3/(2\pswap)$, resulting into an expression for the mean delivery time known as the \textit{3-over-2-approximation}:
\begin{equation}
	\label{eq:3-over-2}
	\left(\frac{3}{2\pswap}\right)^n \cdot \frac{1}{\pgen}
    .
\end{equation}

The 3-over-2 approximation was first used by Jiang et al.\cite{jiang2007fast}, who mentioned that the factor $3/2$ agreed well with simulations in the small-probability regime.
Since then, the approximation has been frequently used \cite{sangouard2007long-distance,simon2007quantum,sangouard2008robust,brask2008memory,sangouard2009quantum,bernardes2011rate,sangouard2011quantum,abruzzo2013quantum,munro2015inside,boone2015entanglement,muralidharan2016optimal,asadi2018quantum,piparo2019quantum,asadi2020longARXIVV1,sharman2020quantum,wu2020nearterm,liorni2020quantum}.

In addition to the 3-over-2 approximation, Sangouard et al. \cite{sangouard2011quantum} noted that the repeater scheme's mean completion time can be bounded using the following remark: the mean of the maximum of two nonnegative i.i.d random variables with mean $\mu$ is lower bounded by $\mu$ and upper bounded by $2\mu$.
These bounds correspond to the scenario where one waits only for a single link to be ready, or for both links to be prepared sequentially, respectively.
Consequently,
\begin{equation}
    \label{eq:markov-repeater-mean-bound}
    \left(\frac{1}{\pswap}\right)^n \cdot \frac{1}{\pgen} 
    \leq E[T] \leq
\left(\frac{2}{\pswap}\right)^n \cdot \frac{1}{\pgen} 
.
\end{equation}
Now we use Markov's inequality, 
    $\Pr(T \geq t) \leq E[T]/t$, which can be rephrased
\begin{equation}
    \label{eq:markov-inequality}
    \Pr(T > t) \leq E[T] \cdot \frac{1}{t + 1},
\end{equation}
since $T$ only takes integral values.
Substituting $E[T]$ by its upper bound from eq.~\eqref{eq:markov-repeater-mean-bound} leads to
\begin{equation}
    \label{eq:markov-repeater-cdf-bound}
    \Pr(T > t) \leq \left(\frac{2}{\pswap}\right)^n \cdot \frac{1}{\pgen} \cdot \frac{1}{t + 1}
    .
\end{equation}
Both the mean bound from eq.~\eqref{eq:markov-repeater-mean-bound} and the tail bound from eq.~\eqref{eq:markov-repeater-cdf-bound} are quite loose bounds, see Fig.~\ref{fig:mean_bound_ratios_plot} and \ref{fig:tail_bound_plot}.
Only recently, it was shown analytically by Kuzmin and Vasilyev that the factor 3/2 from eq.~\eqref{eq:3-over-2} is exact in the limit of vanishing swap success probability, and moreover that the delivery time probability distribution after an entanglement swap in this limit is indeed an exponential distribution \cite{kuzmin2020diagrammatic}.

Our bounds from Sec.~\ref{sec:results} allow us to go beyond these results.
In particular, we show the following.
First, we analytically show that the 3-over-2 approximation is, in essence, an \textit{upper bound} to the mean completion time.
This implies that the 3-over-2 approximation is pessimistic, confirming numerical simulations \cite{bernardes2011rate, shchukin2017waitingPRA}.
Next, we derive two-sided bounds on the tail of the probability distribution of the repeater chain's completion time.
Both the mean bound and the tail bounds coincide in the limit of vanishing success probabilities.
We give the bounds below and plot them in Fig.~\ref{fig:mean_bound_ratios_plot} (mean bounds) and Fig.~\ref{fig:tail_bound_plot} (tail bounds).

\begin{prop}
	\label{prop:repeater-bounds}
    Consider the completion time $\Trepeater_n$ of an equally-spaced, symmetric nested repeater scheme (no distillation) on $2^n$ segments, such as the example in Fig.~\ref{fig:schematic-nested-repeater} for $n=2$. If $n>0$, then:
	\begin{enumerate}[(a)]
        \item \label{item:repeater-chain-mean-bounds}
        the mean completion time is upper bounded as
	\begin{eqnarray*}
			E[\Trepeater_n]
			&\leq&
			\left(\frac{3}{2\pswap}\right)^n \cdot \mu_0
            .
	\end{eqnarray*}
    Here, $\mu_0$ is the mean of any real-valued NBU random variable which stochastically dominates the completion time $\Tgen$ of elementary-link generation.
	In case the elementary-link generation is modelled as discrete attempts which succeed with probability $\pgen$, then we choose $\Tupper$ for this random variable (see Lemma~\ref{lemma:generation-majorization}), resulting in
			\[
                \mu_0 = E[\Tupper] = 1 - \frac{1}{\log(1-\pgen)}
                .
			\]
            If instead the completion time of elementary-link generation is described by the exponentially-distributed random variable $\Tapprox$ (see Sec.~\ref{sec:preliminaries-protocols}), which is NBU itself, then \mbox{$\mu_0 = E[\Tapprox] = 1/\pgen$}.
            By Lemma~\ref{lemma:generation-majorization}, the two models' means only differ slightly: \mbox{$0\leq E[\Tupper] - E[\Tapprox]\leq\frac{1}{2}$} and \mbox{$1 \leq E[\Tupper] / E[\Tapprox] \leq 1 + \pgen / 2$}.
        \item the mean completion time is lower bounded as
            \[
				E[\Trepeater_n]
				\geq
                \frac{1}{\pswap} \cdot
            \left(\frac{3 - 2\pswap}{\pswap (2 - \pswap)}\right)^{n-1} \cdot \nu_0
				.
				\]
                Here, $\nu_0$ is the mean time until the latest of two parallel elementary-link generation processes has finished.
                In case elementary-link generation is modelled as discrete attempts which succeed with probability $\pgen$, then 
            \[
                \nu_0 = \frac{3 - 2\pgen}{\pgen(2 - \pgen)}
            \]
            while if its completion time is modelled by an exponential distribution, then $\nu_0 = 3/(2\pgen)$.
	\item \label{item:repeater-chain-tail-bounds}
        the co-CDF of $\Trepeater_n$ differs from the co-CDF of an exponential distribution by at most a factor $\exp(\pswap)$ from above,
		\[
            \Pr(\Trepeater_n > t) \leq \exp(\pswap)\cdot \exp(- \frac{\pswap \cdot t}{\mupper})
			\]
            while it is lower bounded as
		\[
            \Pr(\Trepeater_n > t) \geq \exp(\frac{-\pswap \cdot t}{\mlower} \cdot \frac{1}{1 - \pswap})
            .
			\]
           Here, we have denoted 
            \[ 
            \mupper = \frac{3}{2} \cdot \left(\frac{3}{2\pswap}\right)^{n-1} \cdot \mu_0
            \]
            and
            \[
                \mlower = \left(\frac{3 - 2\pswap}{\pswap (2 - \pswap)}\right)^{n-1} \cdot \nu_0
            \]
            where $\mu_0$ and $\nu_0$ are given in Prop.~\ref{prop:repeater-bounds}\ref{item:repeater-chain-mean-bounds} and (b).
        \item in the limit where both $\pswap \rightarrow 0$ and $\pgen \rightarrow 0$, the normalized random variable $\Trepeater_n/E[\Trepeater_n]$ follows the exponential distribution with mean 1, and moreover
            \[
                \lim_{\pswap\rightarrow 0, \pgen \rightarrow 0} E[\Trepeater_n]/L_n = 1
            \]
            with 
            \[
                L_n = \left(\frac{3}{2\pswap}\right)^{n} \cdot \frac{1}{\pgen}
                .
            \]
        \item If the completion time of elementary-link generation is described by the exponentially-distributed $\Tapprox$, then $\Trepeater_n$ is NBU, while if it is modelled as discrete attempts, then $\Trepeater_n$ is stochastically dominated (Def.~\ref{def:stochastic-dominance}) by an NBU random variable which satisfies the bounds in items (a-c).
	\end{enumerate}
\end{prop}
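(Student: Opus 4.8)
The plan is to exploit the recursive structure of the scheme: for $n>0$, $T_n$ is exactly the completion time of a $\swapuntilsuccess$ protocol with constant success probability $\pswap$ acting on two independent, identically distributed copies of $T_{n-1}$, so every claim should follow by induction on $n$ from the single-swap bounds of Prop.~\ref{prop:swap-bounds}, with $T_{n-1}$ in the role of the input $T_{\textnormal{input}}$ and the base case $n=1$ being the swap on two elementary links. Because Prop.~\ref{prop:swap-bounds} needs continuous NBU inputs while the discrete model has $T_0=\Tgen$, I would run the two models in parallel: in the exponential model $T_0=\Tapprox$ is continuous and NBU (Example~\ref{example:exponential-distribution}), while in the discrete model I replace $\Tgen$ by the continuous NBU variable $\Tupper\stgeq\Tgen$ of Lemma~\ref{lemma:generation-majorization} and build the tree $T_n'$ on $\Tupper$. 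A short monotonicity lemma — that the recursive construction is monotone under stochastic dominance, so $\Tgen\stoch\Tupper$ propagates to $T_n\stoch T_n'$, provable by coupling the two trees round by round — then lets every upper bound established for $T_n'$ descend to $T_n$ via Def.~\ref{def:stochastic-dominance} and Lemma~\ref{lemma:stochastic-dominance-means}. This already yields the NBU part of (e) and sets up items (a) and (c).

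The upper bounds are then routine inductions. For (a), Prop.~\ref{prop:swap-bounds}(a) makes each $T_n$ (or $T_n'$) NBU, so Prop.~\ref{prop:swap-bounds}(b) applies at every level and gives $E[T_n]\le\tfrac{3}{2\pswap}E[T_{n-1}]$; unrolling from $E[T_0]=\mu_0$ produces $E[T_n]\le(\tfrac{3}{2\pswap})^n\mu_0$. For the upper half of (c), I apply Prop.~\ref{prop:swap-bounds}(c) to the top-level swap to get $\Pr(T_n>t)\le\exp(\pswap-\tfrac{2\pswap t}{3E[T_{n-1}]})$ and substitute the mean bound from (a) for $E[T_{n-1}]$; since enlarging this denominator only weakens the bound, the substitution is valid and yields exactly $\exp(\pswap)\exp(-\pswap t/\mupper)$ with $\mupper=\tfrac{3}{2}(\tfrac{3}{2\pswap})^{n-1}\mu_0$. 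Item (d) follows by sandwiching: after normalizing $t=s\,E[T_n]$ both tail bounds of (c) converge to $e^{-s}$ as $\pswap,\pgen\to0$, because $\pswap E[T_n]/\mupper\to1$ and the lower coefficient behaves identically; alternatively one iterates the limit statement Prop.~\ref{prop:swap-bounds}(d).

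The genuinely new work, and what I expect to be the main obstacle, is the lower bound (b). Here I would use the exact identity $E[T_n]=E[\max(T_{n-1},T_{n-1}')]/\pswap$, which holds model-independently because $T_n$ is a geometric-$\pswap$ compound of the i.i.d.\ round durations $\max(T_{n-1},T_{n-1}')$ (Wald's identity, as in Prop.~\ref{prop:swap-bounds-non-iid}(b)). The trivial bound $E[\max]\ge E[T_{n-1}]$ only reproduces the loose factor $1/\pswap$, so the crux is a dedicated dispersion lemma: for $X=T_{n-1}$, which for $n-1\ge1$ is itself a geometric-$\pswap$ compound, $E[\max(X,X')]\ge\tfrac{3-2\pswap}{2-\pswap}\,E[X]$. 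Writing $E[\max(X,X')]=E[X]+\tfrac12 E|X-X'|$ reduces this to the equivalent mean-absolute-difference bound $E|X-X'|\ge\tfrac{2(1-\pswap)}{2-\pswap}E[X]$ — that is, to the statement that among geometric-$\pswap$ compounds the pure geometric (deterministic round duration) is the least dispersed and hence extremal, with equality there matching the exact base value $\nu_0=\tfrac{3-2\pgen}{\pgen(2-\pgen)}$. I would prove the lemma by analysing the survival function of the geometric compound through its defective renewal equation $\Pr(X>t)=\pswap\Pr(M>t)+(1-\pswap)\Pr(M+X'>t)$ and bounding $\int_0^\infty\Pr(X>t)^2\,dt=E[\min(X,X')]$ from above by $E[X]/(2-\pswap)$; this comparison with the extremal geometric case is the step I expect to require the most care. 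Iterating the resulting recursion $E[T_n]\ge\tfrac{3-2\pswap}{2-\pswap}\cdot\tfrac{1}{\pswap}E[T_{n-1}]$ from the exact base $E[T_1]=\nu_0/\pswap$ gives (b); feeding this lower bound on $E[T_{n-1}]$ into the lower half of Prop.~\ref{prop:swap-bounds}(c) and checking $\mlower\le\tfrac32 E[T_{n-1}]$ then delivers the tail lower bound in (c). The rest of (e) is immediate, since the exponential model is NBU by the induction above and the discrete $T_n$ is dominated by the NBU tree $T_n'$, which satisfies (a)–(c).
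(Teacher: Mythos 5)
Your architecture agrees with the paper's for everything except the lower bounds: the paper, too, obtains (a), the upper tail in (c), (d) and (e) by iterating Prop.~\ref{prop:swap-bounds} over nesting levels, using NBU-preservation at each level and the domination $\Tgen \stoch \Tupper$ to pass from the discrete to the continuous model (your coupling/monotonicity lemma is implicit there). For (b) and $\mlower$, however, the paper works at the level of \emph{distributions}: it introduces an auxiliary process $R_0 = \max(T_0^{(1)},T_0^{(2)})$, $R_{n+1} = \sum_{j=1}^{\max(K^{(1)},K^{(2)})} R_n^{(j)}$, proves $\max(T_n^{(1)},T_n^{(2)}) \stgeq R_n$ by induction --- the key step being that a maximum of two random sums stochastically dominates a single random sum whose number of terms is the maximum of the two counts (Corollary~\ref{lemma:sum-length-stgeq}) --- and then computes $E[R_n]$ exactly via Wald's lemma and $E[\max(K^{(1)},K^{(2)})] = \frac{3-2\pswap}{\pswap(2-\pswap)}$. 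You instead stay at the level of means, via a per-level dispersion inequality $E[\max(X,X')] \geq \frac{3-2\pswap}{2-\pswap}\,E[X]$ for geometric-$\pswap$ compounds, which unrolled from the exact base $E[T_1]=\nu_0/\pswap$ reproduces the identical product formula. That is a genuinely different decomposition; the paper's version buys the stronger distributional statement $M_n \stgeq R_n$, yours avoids the auxiliary process entirely.

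The one genuine hole is your dispersion lemma itself: you state the correct target inequality, $E[\min(X,X')] = \int_0^\infty \Pr(X>t)^2\,dt \leq E[X]/(2-\pswap)$, but the proposed proof (defective renewal equation plus an extremality comparison with the pure geometric) is only a plan, and it is not obviously tractable --- in particular, Brown's tail bounds are too weak for it, since $\Pr(X>t)\le e^{\pswap}e^{-\pswap t/m}$ only yields $E[\min(X,X')] \le e^{2\pswap}E[X]/2$, which exceeds $E[X]/(2-\pswap)$ for small $\pswap$. Note also that you need the lemma for \emph{arbitrary} summand laws, since the round durations $\max(T_{n-2}^{(1)},T_{n-2}^{(2)})$ are not geometric. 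Fortunately the lemma is true in that generality and has a two-line proof which is exactly the dual of the paper's key step: writing $X = \sum_{k=1}^{K_1}M^{(k)}$ and $X' = \sum_{k=1}^{K_2}M'^{(k)}$, conditioning on $(K_1,K_2)=(i,j)$ and using $\min(A,B)\leq A$ and $\min(A,B)\leq B$ pointwise gives $E[\min(X,X')\mid K_1=i,K_2=j] \leq \min(i,j)\,E[M]$, whence by $E[\min(K_1,K_2)] = \frac{1}{\pswap(2-\pswap)}$ one gets $E[\min(X,X')] \leq \frac{E[M]}{\pswap(2-\pswap)} = \frac{E[X]}{2-\pswap}$, and $E[\max(X,X')] = 2E[X]-E[\min(X,X')]$ delivers your inequality, with equality precisely in your claimed extremal (deterministic-summand) case. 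With that substitution your recursion for (b) and $\mlower$ is sound, the remaining steps you describe are valid as in the paper, and the two proofs yield the same constants.
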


Most statements in Prop.~\ref{prop:repeater-bounds} directly follow by applying Prop.~\ref{prop:swap-bounds} in Sec.~\ref{sec:results} iteratively over the number of nesting levels.
In particular, a useful feature following from Prop.~\ref{prop:swap-bounds}(a) is that at each nesting level, the completion time possesses the NBU property (Def.~\ref{def:nbu}).
Consequently, the mean upper bound in Prop.~\ref{prop:swap-bounds}(c), which is only applicable to NBU random variables, can be used at each nesting level.
Only the lower bound in (b) and the expression for $\mlower$ in (c) do not follow from Prop.~\ref{prop:swap-bounds}.
These can be found by noting that the maximum of two sums dominates a single sum whose length is the maximum of the original two sum lengths.
We give the full proof in Appendix~\ref{app:repeater-bounds}.

\begin{figure}
    \includegraphics[width=\columnwidth]{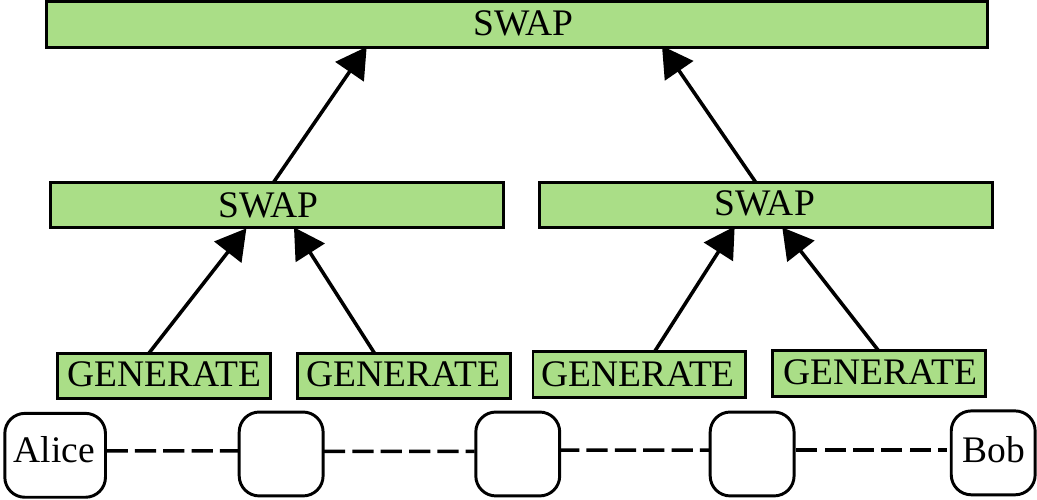}
    \caption{Schematic of a nested repeater protocol on five nodes ($n=2$ nesting levels)
        The figure depicts the protocol for delivering entanglement between remote parties Alice and Bob through three repeater nodes.
		At the start of the protocol, all nodes attempt to generate an elementary link with each of their neighbors in parallel.
		An entanglement swap is performed once the two leftmost links are ready, and similarly for the two rightmost links.
		Once both swaps have succeeded (failure requires regeneration of the involved links), the middle node performs an entanglement swap, which yields entanglement between Alice and Bob.
    \label{fig:schematic-nested-repeater}
	}
\end{figure}

\begin{figure}
	\includegraphics[width=\columnwidth]{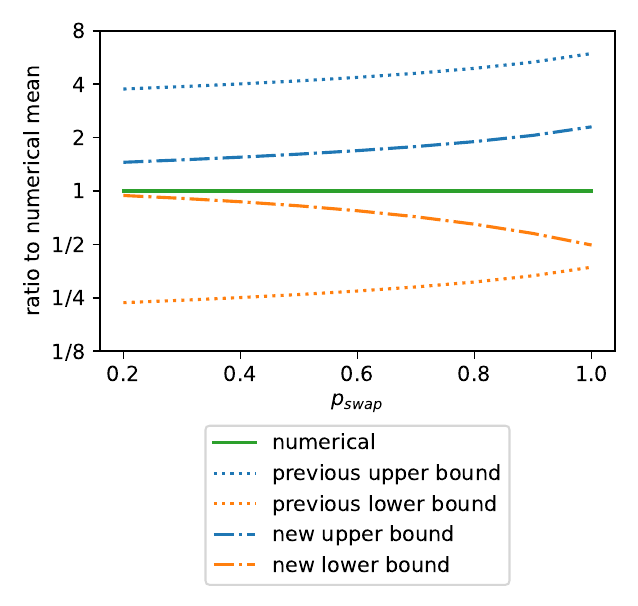}
    \caption{The ratio of different upper and lower bounds on the mean completion time of a nested repeater protocol, as compared to the numerically calculated mean with the deterministic algorithm from \cite{brand2020efficient} ( 17-node repeater chain; $\pgen=0.5$; entanglement generation is performed in discrete attempts).
    The horizontal axis shows the success probability $\pswap$ of entanglement swapping.
    The figure shows bounds known before this work (eq.~\eqref{eq:markov-repeater-mean-bound}) and the tighter bounds from this work in Prop.~\ref{prop:repeater-bounds}\ref{item:repeater-chain-mean-bounds} and (b).
	\label{fig:mean_bound_ratios_plot}
	}
\end{figure}

\begin{figure}[h!]
	\centering
	\includegraphics[width=0.7\columnwidth]{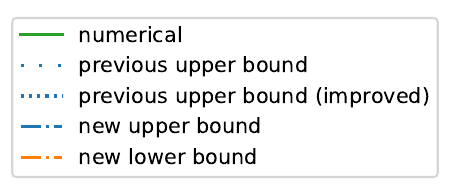}
	\includegraphics[width=\columnwidth]{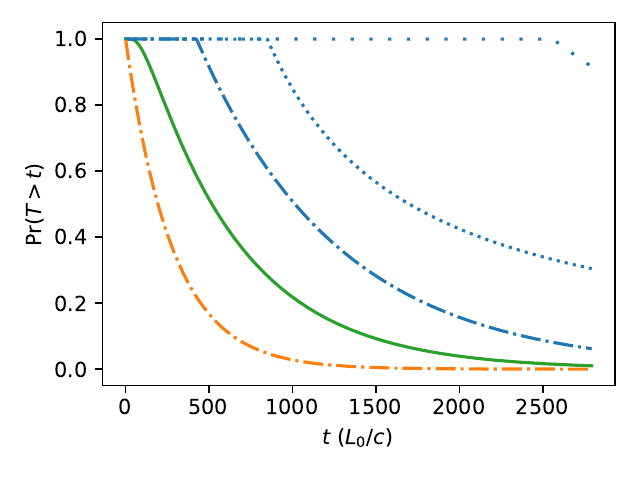}
	\includegraphics[width=\columnwidth]{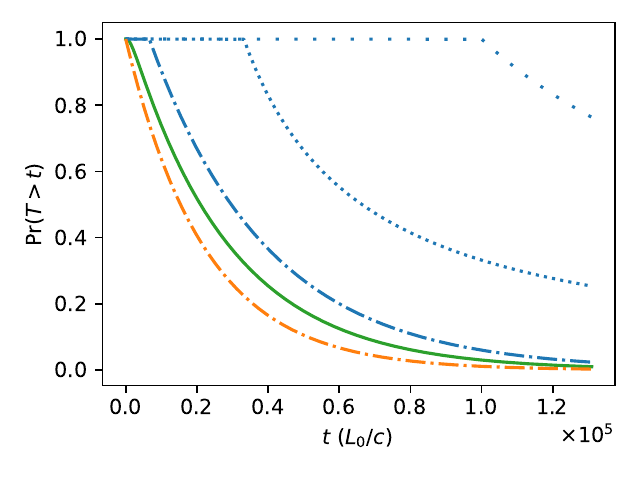}
	\caption{
        Probability distribution of the completion time $T$ of a symmetric nested repeater protocol.
        $T$ is given in units of $L_0/c$, which is the time of a single attempt at elementary-link-generation between neighboring nodes at distance $L_0$ (see also Sec.~\ref{sec:preliminaries-protocols}).
        The figure shows the numerically computed distributions using the deterministic algorithm from \cite{brand2020efficient}, a polynomially-decaying bound known before this work which is derived from Markov's inequality and a bound on the mean completion time (eq.~\eqref{eq:markov-repeater-cdf-bound}), and two improvements on eq.~\eqref{eq:markov-repeater-cdf-bound} we achieve in this work: first, a simple improvement by using Markov's inequality and the improved bound on the mean completion time (Prop.~\ref{prop:repeater-bounds}\ref{item:repeater-chain-mean-bounds}), followed by the exponentially-decaying two-sided tail bounds from Prop.~\ref{prop:repeater-bounds}\ref{item:repeater-chain-tail-bounds}.
        The plot shows results for a repeater chain with 17 nodes ($\pgen=0.1$) where entanglement generation is performed in discrete attempts.
        The swap success probability is $\pswap=0.5$ (top), and $\pswap = 0.2$ (bottom).
	\label{fig:tail_bound_plot}
	}
\end{figure}

By using the notion of stochastic dominance (Def.~\ref{def:stochastic-dominance}), Prop.~\ref{prop:repeater-bounds} can straightforwardly be extended to the asymmetric case, i.e. where the success probabilities for entanglement generation and swapping vary throughout the chain.
This is the case, for example, when the segments are not evenly distributed.
We obtain bounds for an asymmetric repeater chain protocol by noting that its completion time $\Trepeater_n^{\textnormal{asym}}$ is stochastically dominated as 
\begin{equation}
    \Trepeater_n^{\max}
    \stleq
\Trepeater_n^{\textnormal{asym}}
    \stleq
    \Trepeater_n^{\min}
    \label{eq:asym-repeater}
\end{equation}
where $\Trepeater_n^{\max}$ ($\Trepeater_n^{\min}$) is the completion time of the symmetric repeater protocol where all success probabilities are replaced by their maximum (minimum).
We thus arrive at the following corollary (formal proof of eq.~\eqref{eq:asym-repeater} in Appendix~\ref{app:asymmetric-repeater}).

\begin{cor}[Asymmetric nested repeater]
    \label{cor:asymmetric-repeater}
    Consider the variant to the repeater chain protocol from Prop.~\ref{prop:repeater-bounds} where the success probabilities for generation and swapping are different throughout the chain.
    Denote by $\pgen^{\min}$ ($\pgen^{\max}$) and $\pswap^{\min}$ ($\pswap^{\max}$) their minimum (maximum), respectively.
    Then, after replacing $\pgen$ and $\pswap$ by $\pgen^{\min}$ and $\pswap^{\min}$ ($\pgen^{\max}$ and $\pswap^{\max}$), respectively, the upper bounds (lower bounds) to $E[\Trepeater_n]$ from Prop.~\ref{prop:repeater-bounds} still hold, and so do the upper bounds (lower bounds) to $\Pr(\Trepeater_n > t)$.
\end{cor}

We finish this section by noting a stronger two-sided bound on the completion time $T$ of an equally-spaced repeater chain than Prop.~\ref{prop:repeater-bounds}(a-b) in the case of deterministic swapping ($\pswap = 1$).
The number of segments can be any integer $N\geq 2$.
Since we assume that the entanglement swaps take no time (Sec.~\ref{sec:preliminaries-protocols}), the mean completion time for this scenario is \cite{collins2007multiplexed, khatri2019practical}
\begin{equation}
    \label{eq:mean-max-deterministic}
    E[T] = E[\max(\Tgen^{(1)}, \Tgen^{(2)}, \dots, \Tgen^{(N)})]
\end{equation}
where $\Tgen^{(k)}$ is an independent and identically distributed copy of $\Tgen$ and describes the completion time of entanglement generation over the $k^{\textnormal{th}}$ segment.
Eq.~\eqref{eq:mean-max-deterministic} has been shown to equal \cite{bernardes2011rate, praxmeyer2013reposition, shchukin2017waitingPRA, khatri2019practical}
\begin{equation}
    \label{eq:mean-max-deterministic-inf-sum}
E[\max(\Tgen^{(1)} , \dots , \Tgen^{(N)} )]
=
    \sum\limits_{k=1}^N {N \choose k} \frac{(-1)^{k+1}}{1 - (1 - \pgen)^k}
    .
\end{equation}

Unfortunately, since eq.~\eqref{eq:mean-max-deterministic-inf-sum} contains a sum whose length is $N$, it is not obvious how $E[T]$ scales with $N$ or $\pgen$.
To get an idea of the scaling, we could use the fact that for $\pgen \ll 1$, the completion time of entanglement generation (the geometric distribution, which is discrete) is well approximated by a exponential distribution (which is continuous).
Formally, by replacing $\Tgen\rightarrow\Tapprox$, the following approximation to $E[T]$ in eq.~\eqref{eq:mean-max-deterministic-inf-sum} has been derived \cite{shchukin2017waitingPRA, schmidt2019memory-assisted}:
    \begin{equation}
        \label{eq:harmonic-number-approximation}
    E[T] \approx \frac{1}{\pgen} \cdot H_N
    \end{equation}
where
\begin{equation}
    \label{eq:harmonic-number}
        H_N := \sum_{k=1}^{N} \frac{1}{k} = \gamma + \log(N) + O\left(\frac{1}{N}\right)
\end{equation}
		is the $N$-th harmonic number and $\gamma \approx 0.5772$ is the Euler-Mascheroni constant.
    If $\pgen \ll 1$, the approximation works well and shows how $E[T]$ scales in $N$ and $\pgen$.
    For $\pgen$ close to $1$, however, it does not: for example, for $\pgen = 1$ we have $E[T] = 1$ but eq.~\eqref{eq:harmonic-number-approximation} still shows $E[T]$ to scale linearly with $H_N$.

    A fairly tight bound which shows the scaling for all $\pgen$ is obtained in work by Eisenberg~\cite{eisenberg2007expectation}.
    To our knowledge no-one has so far noted it in the context of completion times of quantum network protocols.
    We state it below.

\begin{prop}
    \label{prop:deterministic-swapping}
    \cite{eisenberg2007expectation}
    Suppose that entanglement swapping is deterministic ($\pswap = 1$).
    Let $E[T]$ denote the mean completion time of a repeater chain over $N$ segments.
    Then $E[T]$ is bounded as
    \[
        a \cdot H_N \leq
            E[T]
            \leq
            1 + a \cdot H_N
		\]
        where $H_N$ is the $N$-th harmonic number given in eq.~\eqref{eq:harmonic-number} and
        \[
            a = \mugenupper - 1 = \frac{-1}{\log(1 - \pgen)} = \frac{1}{\pgen} - \frac{1}{2} + O(\pgen)
            .
        \]
\end{prop}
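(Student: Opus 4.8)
The plan is to reduce the discrete quantity $E[\max(\Tgen^{(1)},\dots,\Tgen^{(N)})]$ to the well-understood maximum of independent \emph{exponential} variables by exhibiting $\Tgen$ as the ceiling of the continuous $\Texp$ from Lemma~\ref{lemma:generation-majorization}. Recall that $\Texp$ has mean $a=\mugenupper-1=-1/\log(1-\pgen)$, equivalently survival function $\Pr(\Texp>x)=(1-\pgen)^x$. The crucial observation is the \emph{exact} distributional identity $\Tgen\stackrel{d}{=}\lceil\Texp\rceil$: for an integer $k\geq 1$ one computes $\Pr(\lceil\Texp\rceil=k)=\Pr(k-1<\Texp\leq k)=(1-\pgen)^{k-1}-(1-\pgen)^k=\pgen(1-\pgen)^{k-1}$, which is precisely the geometric law of eq.~\eqref{eq:geometric-distribution}. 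This turns a discrete extreme-value problem into a continuous one with only a controlled discretization gap.

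With this identity in hand I would couple the $N$ generation processes to a single family of exponentials: take $\Texp^{(1)},\dots,\Texp^{(N)}$ i.i.d.\ with mean $a$ and set $\Tgen^{(k)}=\lceil\Texp^{(k)}\rceil$. Because the ceiling function is nondecreasing it commutes with the maximum, so $\max_k\lceil\Texp^{(k)}\rceil=\lceil\max_k\Texp^{(k)}\rceil$ holds pointwise; writing $W=\max_k\Texp^{(k)}$ therefore gives $E[T]=E[\lceil W\rceil]$. The next ingredient is the classical value of the mean of the maximum of $N$ i.i.d.\ exponentials. Integrating $\Pr(W>x)=1-(1-(1-\pgen)^x)^N$ — for instance by the substitution $u=(1-\pgen)^x$ together with the standard integral $\int_0^1\frac{1-(1-u)^N}{u}\,du=H_N$, or equivalently via the order-statistics decomposition of $W$ into independent exponential increments of rates $N/a,(N-1)/a,\dots,1/a$ — yields $E[W]=a\,H_N$. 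Finally the deterministic sandwich $W\leq\lceil W\rceil\leq W+1$, taken in expectation, produces $a H_N=E[W]\leq E[\lceil W\rceil]=E[T]\leq E[W]+1=1+a H_N$, which is exactly the claimed two-sided bound.

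The only genuinely substantive step is the first one: recognizing that the discrete completion time is \emph{exactly}, not merely approximately, the ceiling of $\Texp$, and that ceilings commute with maxima. Everything downstream is then routine, since the exponential-maximum mean $a H_N$ is standard (this is the computation carried out by Eisenberg), and the discretization error is bounded deterministically by the width-one gap between a real number and its ceiling — this is what produces the purely additive $+1$ in the upper bound rather than any multiplicative loss. The main thing to get right, beyond the bookkeeping, is to match the parameter of $\Texp$ so that its ceiling reproduces $\pgen(1-\pgen)^{k-1}$; once the mean $a$ is pinned down correctly, no analytic obstacle remains.
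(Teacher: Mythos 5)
Your proof is correct and is essentially the argument of Eisenberg~\cite{eisenberg2007expectation}, which the paper cites in lieu of a proof: the exact identity $\Tgen \stackrel{d}{=} \lceil \Texp \rceil$, the commutation of ceiling with maxima, the standard value $E[W] = a H_N$ for the maximum of $N$ i.i.d.\ exponentials of mean $a$, and the sandwich $W \leq \lceil W \rceil \leq W + 1$. No gaps; this matches the intended derivation.
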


\section{Second application: a quantum switch \label{sec:application-switch}}

Here, we apply our results to a quantum switch.
A quantum switch serves $k$ user nodes.
Each user is connected to the switch by an arm, which produces bipartite entanglement (a link) between switch and user.
As soon as each user has produced a link with the switch, the switch performs a $k$-fuse operation, i.e. a probabilistic operation converting $k$ bipartite links into a single $k$-partite entangled state on the user nodes.

\begin{figure}
    \centering
    \includegraphics[width=0.25\textwidth]{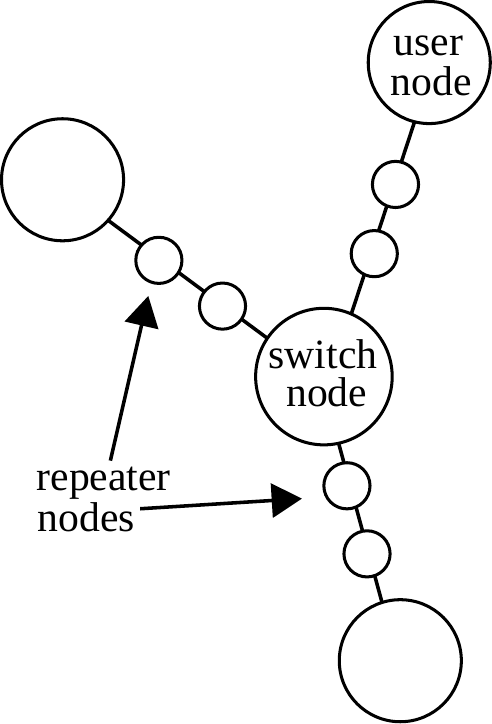}
    \caption{
        \label{fig:switch}
        A quantum switch with $3$ users, each connected to the switch by an identical repeater chain which produces links between user and switch.
        The switch produces $3$-partite entangled states, shared between the users, by performing a probabilistic operation on $3$ links, one with each user node, as soon as these $3$ links are available.
    }
\end{figure}

Vardoyan et al., considered the scenario in which each user produces entanglement continuously with the switch and the switch fuses whenever it can \cite{vardoyan2019stochastic}.
They obtained analytical expressions for the rate at which the switch produces multipartite entanglement in the steady-state regime.
Here, we consider the alternative protocol where the goal is to produce only a single $k$-partite state.
We go beyond the model of Vardoyan et al., by replacing the arms, which connect the switch to the user, by an arbitrary entanglement-distribution network whose completion time is NBU.
An example choice for such a network is the symmetric repeater chain from Sec.\ref{sec:application-repeater}, yielding the network topology as depicted in Fig.~\ref{fig:switch}, 
Our tools allow us to achieve bounds on the completion time of the switch, as described in the following proposition.

\begin{prop}
    \label{prop:switch}
    Consider a $k$-armed quantum switch with fusion success probability $\pfuse$.
    Suppose that the completion times of the different arms are independent and identically distributed according to an NBU random variable $\Tarm$.
    Denote by $\Tswitch$ the time until the switch performs the first successful $k$-fuse attempt.
    Then:
    \begin{enumerate}[(a)]
        \item{$\Tswitch$ is NBU;}
        \item{
                The mean of $\Tswitch$ is bounded as
            \[ 
            E[\Tswitch] \leq \left(k-1 + \frac{1}{k}\right) \cdot \frac{E[\Tarm]}{\pfuse}
            .
            \]
}
        \item{
                $\Tswitch$'s tail decays exponentially fast:
            \[ 
            \Pr(\Tswitch > t) \leq \exp(\pfuse - \frac{\pfuse \cdot t}{(k - 1 + 1/k)\cdot E[\Tarm]})
            .
            \]
}
    \end{enumerate}
\end{prop}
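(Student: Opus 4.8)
The plan is to recognize Prop.~\ref{prop:switch} as a direct specialization of the general bound Prop.~\ref{prop:swap-bounds-non-iid}. The $k$-fuse operation at the switch is precisely a $\restartuntilsuccess$ protocol with constant success probability $\pfuse$ acting on $n = k$ input states, namely the $k$ links produced by the arms. By assumption these arms have completion times that are independent and identically distributed copies of the NBU random variable $S$. Thus, setting $T_1, \dots, T_k$ to be these i.i.d.\ copies and $p = \pfuse$, the hypotheses of Prop.~\ref{prop:swap-bounds-non-iid} are met under the continuity convention standing throughout Sec.~\ref{sec:results} (if an arm is modelled by a discrete process, one first passes to a continuous NBU dominating variable exactly as in Lemma~\ref{lemma:generation-majorization}).

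With this identification, part (a) is immediate, since Prop.~\ref{prop:swap-bounds-non-iid}(a) states that $\Tafter = T$ is NBU. For part (b) I would write $m = E[\max(T_1, \dots, T_k)]$, so that Prop.~\ref{prop:swap-bounds-non-iid}(b) gives $E[T] = m/\pfuse$ exactly. Because the $T_j$ are identically distributed with common mean $E[S]$, the sharper i.i.d.\ estimate Prop.~\ref{prop:swap-bounds-non-iid}(f) applies and yields $m \leq (k - 1 + 1/k)\,E[S]$. Substituting this into $E[T] = m/\pfuse$ produces the claimed bound
\[
    E[T] \leq \left(k - 1 + \frac{1}{k}\right)\frac{E[S]}{\pfuse}.
\]

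For part (c) I would start from the tail bound Prop.~\ref{prop:swap-bounds-non-iid}(c), which gives $\Pr(T > t) \leq \exp\left(\pfuse - \pfuse t/m\right)$ for all $t$. The one point requiring care is that $m$ now sits in the denominator of the exponent, so before substituting $m \leq (k-1+1/k)E[S]$ I must check the direction of the inequality: the derivative of the exponent $\pfuse - \pfuse t/m$ with respect to $m$ equals $\pfuse t/m^2 \geq 0$, so the right-hand side is nondecreasing in $m$, and replacing $m$ by the larger quantity $(k-1+1/k)E[S]$ only enlarges the bound, preserving its validity as an upper bound on $\Pr(T > t)$. This recovers exactly the stated tail bound. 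I expect no genuine obstacle here: the entire content is carried by Prop.~\ref{prop:swap-bounds-non-iid}, and the only subtleties are this monotonicity check in (c) and confirming that it is the i.i.d.\ hypothesis which licenses the combinatorial factor $k - 1 + 1/k$ of part (f) in place of the generic $\sum_j E[T_j] = k\,E[S]$ of Prop.~\ref{prop:swap-bounds-non-iid}(e), the former being strictly smaller for every $k \geq 2$.
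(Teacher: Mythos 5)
Your proposal is correct and follows essentially the same route as the paper, which likewise obtains (a) from Prop.~\ref{prop:swap-bounds-non-iid}(a), (b) from the exact identity $E[T]=m/\pfuse$ of Prop.~\ref{prop:swap-bounds-non-iid}(b) combined with the i.i.d.\ bound of Prop.~\ref{prop:swap-bounds-non-iid}(f), and (c) by instantiating the tail bound of Prop.~\ref{prop:swap-bounds-non-iid}(c) with the mean upper bound from (b). Your explicit monotonicity check that the exponent $\pfuse - \pfuse t/m$ is nondecreasing in $m$ is a detail the paper leaves implicit but is exactly the step that justifies the substitution.
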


Prop.~\ref{prop:switch}(a) follows directly from Prop.~\ref{prop:swap-bounds-non-iid}(a) (Sec.~\ref{sec:results}).
Prop.~\ref{prop:switch}(b) is a consequence of the expression for the mean completion time in Prop.~\ref{prop:swap-bounds-non-iid}(b) and the upper bound in Prop.~\ref{prop:swap-bounds-non-iid}(g), while Prop.~\ref{prop:switch}(c) is an instantiation of the tail bound of Prop.~\ref{prop:swap-bounds-non-iid}(c) combined with the mean upper bound of Prop.~\ref{prop:switch}(b).

\section{Discussion \label{sec:discussion}}

The distribution of remote entanglement is a key element of many quantum network applications.
In this work, we provided analytical bounds on both the mean and quantiles of entanglement delivery times for a large class of protocols.
We applied these results to a nested quantum repeater chain scheme and to a quantum switch, and obtained bounds which are tighter than present in the literature.

In particular, we considered a frequently-used approximation to the mean entanglement-delivery time in the nested repeater chain scheme, known as the 3-over-2 formula.
This approximation is derived by assuming that the delivery time follows an exponential distribution at each nesting level.
It was not known in general how good this approximation is.
Moreover, finding the exact mean delivery time has been an open problem for more than ten years~\cite{sangouard2011quantum}.
We made a large step towards solving this question by showing that the co-CDF of the delivery time, i.e. the probability that entanglement is delivered after time $t$, is lower bounded by the co-CDF of an exponential distribution, and upper bounded by the co-CDF of an exponential distribution multiplied by a factor which is independent of $t$.
In the limit of small success probabilities of the repeater's components, the bounds coincide.
Second, we show that the 3-over-2 formula is, in essence, an upper bound to the mean delivery time, rendering old analyses building upon this approximation pessimistic.

Regarding future work, note that in many quantum internet scenarios, already-produced entanglement waits for the generation of other entanglement and in the meantime suffers from memory noise.
We leave for future work converting our bounds on the delivery time to bounds on the amount of memory noise, and thus on the quality of the produced state.

In this work we only focused on the first remote entanglement that is delivered.
Some protocols, however, might deliver entanglement while still holding residual entanglement, for example at lower levels in case of the nested repeater chain.
In such a case, it is not optimal to restart the protocol for producing a second entangled pair of qubits, since that would require discarding the residual entanglement.
Hence, another possibility for future work would be to extend our results to protocols which produce multiple entangled pairs without discarding existing entanglement in between.

Our bounds are partially based on a novel connection with reliability theory.
We expect that reliability-theoretic tools will be useful in solving other open problems in quantum networks too.

\section*{Acknowledgements}
The authors would like to thank Kenneth Goodenough, Boxi Li and Filip Rozp\k{e}dek for helpful discussions, and would like to thank Kenneth Goodenough, Boxi Li and Gayane Vardoyan for critical reading of the manuscript.
We thank an anonymous reviewer of this work for noting that Cor.~\ref{cor:asymmetric-repeater} holds.
This work was supported by the QIA project (funded by European Union's Horizon 2020, Grant Agreement No. 820445), the NEASQC project (Horizon 2020, Grant Agreement No. 951821) and by the Netherlands Organization for Scientific Research (NWO/OCW), as part of the Quantum Software Consortium program (project number 024.003.037 / 3368).

\bibliographystyle{IEEEtran}
\bibliography{allbibs}

\appendix
\onecolumngrid
\section{Proofs of main results \label{sec:methods}}
Here, we prove our main results from Sec.~\ref{sec:results}.
We provide proofs in the following order.
First, a proof of Lemma~\ref{lemma:generation-majorization}.
Then, we will prove Prop.~\ref{prop:swap-bounds-non-iid}.
Since Prop.~\ref{prop:swap-bounds} is a special case of Prop.~\ref{prop:swap-bounds-non-iid}, we do not prove it separately.

\subsection{Proof of Lemma~\ref{lemma:generation-majorization}}

Here, we prove the four parts of Lemma~\ref{lemma:generation-majorization}: (i) that $\Tgen$, the completion time of heralded entanglement generation with probability $\pgen$, is stochastically dominated by $\Tupper = 1 + \Texp$, where $\Texp$ is exponentially distributed with parameter $-1 / \log(1 - \pgen)$.
Next, (ii) that the mean of $\Tupper$ equals 
\[
1 - \frac{1}{\log(1-\pgen)} = 
\frac{1}{\pgen} + \frac{1}{2} + O(\pgen)
.
\]
Then, (iii) that $0 \leq E[\Tupper] - E[\Tgen] \leq \frac{1}{2}$ and (iv) that $0 \leq E[\Tupper] / E[\Tgen] \leq 1 + \pgen/2$.
Fifth, (v) that $\Tupper$ is NBU.

Regarding (i), we use the definition of the geometric distribution in eq.~\eqref{eq:geometric-distribution}, from which it follows that the survival function of $\Tgen$ is given by
	\[
        \Pr(\Tgen > t) = (1 - \pgen)^{\lfloor t \rfloor}
		\]
		for all $t \geq 1$, where $\lfloor t \rfloor$ denotes the floor of $t$: $\lfloor t \rfloor = t$ if $t$ is an integer and it equals the largest integer strictly smaller than $t$ otherwise.
    For $0 \leq t < 1$, we have $\Pr(\Tgen > t) = 1 = \Pr(\Tupper > t)$, so the definition of stochastic dominance (Def.~\ref{def:stochastic-dominance}) is trivially satisfied on the interval $t \in [0, 1)$.
    We therefore only need to consider $t \geq 1$.
	Using the notation from Lemma~\ref{lemma:generation-majorization}, we now bound
	\begin{eqnarray*}
		\Pr(\Tgen > t) 
		&=& (1 - \pgen)^{\lfloor t \rfloor} \\
		&\leq& (1 - \pgen)^{t - 1}\\
		&=& \exp\left[ (t - 1) \cdot \log(1 - \pgen)\right]\\
        &\stackrel{*}{=}& \Pr(\Texp > t - 1)\\
		&=& \Pr(1 + \Texp > t)
		,
	\end{eqnarray*}
    where in ${}^*$, we have used the definition of the exponential distribution from eq.~\eqref{eq:exponential-distribution}.
    For proving (ii), we recall that the mean of an exponential distribution with co-CDF $e^{-\lambda t}$ with parameter $\lambda > 0$ is $1/\lambda$, hence the mean of $\Tupper$ is
    \begin{eqnarray*}
        E[\Tupper] &=& E[1 + \Texp] \\
        &=& 1 + E[\Texp] \\
        &=& 1 - \frac{1}{\log(1-\pgen)}\\
        &=& \frac{1}{\pgen} + \frac{1}{2} + O(\pgen)
    \end{eqnarray*}
    where in the last equation, we used the expansion of $1/\log(1+x)$ for $|x|<1$ by Kowalenko \cite{kowalenko2008properties}.
    We show (iii) by computing the derivative of $E[\Tupper] - E[\Tgen]$ as function of $\pgen$, which equals
    \begin{equation}
        \label{eq:derivative}
        \frac{-1}{(1-\pgen) \log^2(1 - \pgen)} + \frac{1}{\pgen^2}
        .
    \end{equation}
    It is not hard to see that eq.~\eqref{eq:derivative} is upper bounded by $0$ for all $\pgen \in (0, 1)$: we start with the well-established inequality\cite{topsoe2007some}
    \[
    \log(x) \geq \frac{x - 1}{\sqrt{x}}
    \]
    for $0 < x \leq 1$, which after the substitution $x \rightarrow 1 - \pgen$ becomes
    \begin{equation}
        \label{eq:derivation-nonpositive-derivative}
    \log(1 - \pgen) \geq \frac{- \pgen}{\sqrt{1 - \pgen}}
    .
    \end{equation}
    Since both sides of eq.~\eqref{eq:derivation-nonpositive-derivative} are negative and the squaring function $x\mapsto x^2$ is monotonically decreasing for $x\leq 0$, squaring both sides requires the inequality sign to flip,
    \[
    \log^2(1 - \pgen) \leq \frac{\pgen^2}{1 - \pgen}
    \]
    and hence
    $(1 - \pgen) \log^2(1 - \pgen) \leq \pgen^2$, implying that the derivative in eq.~\eqref{eq:derivative} is upper bounded by $0$ for all $\pgen\in (0, 1)$.
    Therefore, $E[\Tupper] - E[\Tgen]$ is monotonically decreasing in that regime and achieves its optima at $\pgen \downarrow 0$ and $\pgen \uparrow 1$, which are $\frac{1}{2}$ and $0$, respectively, yielding precisely the bound in (iii).
    For showing (iv), divide each side of $0 \leq E[\Tupper] - E[\Tgen] \leq \frac{1}{2}$ by $E[\Tgen]$ to obtain
    \[
        0 \leq \frac{E[\Tupper]}{E[\Tgen]} - 1 \leq \frac{1}{2E[\Tgen]} = \frac{\pgen}{2}
    \]
    from which (iv) directly follows.
    For proving (v), that $\Tupper = 1 + \Texp$ is an NBU random variable, we consider two cases with respect to the definition of NBU (Def.~\ref{def:nbu}):
	\begin{itemize}
		\item{both $x< 1$ and $y< 1$. Then 
			\[
				\Pr(1 + \Texp > x) = \Pr(1 + \Texp > y) = 1
				\]
				so the definition of NBU trivially holds by the fact that $\Pr(1 + \Texp > x + y)$ cannot exceed 1;}
		\item{at least one of $x$ or $y$ is 1 or larger. Assume without loss of generality that $y \geq 1$. Then note that $\Pr(1 + \Texp > x + y)$ equals
			\begin{eqnarray*}
				&& \Pr(\Texp > x + (y - 1)) \\
				&\leq& \Pr(\Texp > x) \Pr(\Texp > y - 1) \\
				&=& \Pr(\Texp > x) \Pr(1 + \Texp > y)
			\end{eqnarray*}
            where the inequality holds by the fact that $\Texp$ is itself NBU (see Example~\ref{example:exponential-distribution}).
			The proof finishes by noting that $1 + \Texp$ stochastically dominates $\Texp$, i.e. \mbox{$\Pr(1 + \Texp > y) \geq \Pr(\Texp > y)$}.
			}
	\end{itemize}

\subsection{Proof of Proposition~\ref{prop:swap-bounds-non-iid}}

Now, we prove Prop.~\ref{prop:swap-bounds-non-iid}, which automatically proves its special case Prop.~\ref{prop:swap-bounds}.
For our proof, we first give a formal definition of $\Tafter$, following Brand et al. \cite{brand2020efficient}.
The $\restartuntilsuccess$ acts on $n$ quantum states, which first need to have been delivered.
Thus, we define a fresh random variable to refer to the time until the last of $n$ quantum states has been delivered:
\[
	M := \max(T_1, \dots, T_n)
	.
	\]
The restarts of the $\restartuntilsuccess$ protocol, according to a constant success probability $p$, result in the fact that $\Tafter$ can be written as a \textit{geometric sum} of copies of $M$:
	\begin{equation}
		\label{eq:geometric-sum}
	\Tafter = \sum_{k=1}^K M^{(k)}
	\end{equation}
	where $M^{(k)}$ is an i.i.d. copy of $M$ and $K$ is a geometrically distributed random variable with parameter $p$:
	\begin{equation}
		\label{eq:summation-bound}
	\Pr(K = k) = p (1 - p)^{k-1}	
	.
	\end{equation}
Eq.~\eqref{eq:geometric-sum} reflects the fact that the $\restartuntilsuccess$ protocol needs to perform $K$ attempts at success, each of which takes time given by a fresh instance of $M$ (for a more thorough explanation, see \cite{brand2020efficient}).

Now we will prove each of the statements (a-f) from Prop.~\ref{prop:swap-bounds-non-iid}.
For statement (a), we need to show that $\Tafter$ is NBU.
This follows directly from the following two facts: 
\begin{enumerate}[(i)]
	\item{NBU-ness is preserved under the maximum: if $T_1, \dots, T_n$ are NBU random variables, then so is $M$;}
    \item{NBU-ness is preserved under the geometric sum: if $M$ is an NBU random variables, then so is \mbox{$\Tafter = \sum_{k=1}^K M^{(k)}$}.}
\end{enumerate}
We prove item (i) in Sec.~\ref{app:nbu-preserved-under-maximum}, while item (ii) was proven by Brown, see Sec. 3.2 in \cite{brown1990error} \footnote{Let us clarify here that the work by Brown proves that the NBU property is preserved under the geometric sum if $K$ is distributed according to eq.~\eqref{eq:summation-bound}. However, the same paper also proves that if $K$ is shifted by 1, i.e. $\Pr(K = k) = p (1 - p)^k$, then the geometric sum is \textit{always} NWU, irrespective of the summand random variable. However, we will not use the latter case here.}.

Statement (b), $E[\Tafter] = m/p$ with \mbox{$m=E[M]$}, is a simple generalization of \cite[Eq.(2)]{collins2007multiplexed}.
We prove it by applying a well-known fact of randomized sums called Wald's Lemma \cite{wald1947sequential} to eq.~\eqref{eq:geometric-sum}, which is applicable when the length of the sum is independent of the summand.
Applying Wald's lemma results in
\[
    E[\Tafter] = E[M] \cdot E[K]
\]
and hence $E[\Tafter] = m \cdot \frac{1}{p}$.

Statement (c) describes a two-sided bound on the co-CDF of $\Tafter$:
			\[
				\exp\left(\frac{-p\cdot t}{m} \cdot \frac{1}{1 - p}\right)
                \leq
			\Pr(\Tafter > t)
			\leq \exp\left(p - \frac{p \cdot t}{m}\right)
			.
        \]
These bounds follow from the following lemma from Brown, see eq.3.2.4 in \cite{brown1990error}:
\begin{lemma}
	\label{lemma:tail-bounds-brown}
	\cite{brown1990error}
Let $X$ be a real-valued random variable with $\Pr(X < 0) = 0$.
	Define the geometric compound sum of i.i.d. copies of $X$ as $Y := \sum_{k=1}^K X^{(k)}$, where $K$ follows the geometric distribution with success probability $p$ (eq.~\eqref{eq:summation-bound}).
	Moreover, define $Y_0 := \sum_{k=1}^{K_0} X^{(k)}$, where $K_0 = K - 1$.
	Then
	\[
		\Pr(Y > t) \leq \exp(p)\exp\left(- t/ E[Y]\right) \]
		while
	\[
	\Pr(Y > t) \geq \exp\left(- t/ E[Y_0]\right)
	.
		\]	
\end{lemma}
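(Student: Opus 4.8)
The plan is to read the lemma through the lens of \emph{compound-geometric} (equivalently ruin-theoretic) random variables: the sum $Y=\sum_{k=1}^K X^{(k)}$ of eq.~\eqref{eq:geometric-sum}, with $K$ geometric as in eq.~\eqref{eq:summation-bound}, is exactly a geometric convolution, and both inequalities quantify how far its survival function $\bar G(t):=\Pr(Y>t)$ sits from the exponential survival $e^{-t/E[Y]}$ that it converges to as $p\to0$. The starting point I would use is the one-step recursion obtained by conditioning on whether the first attempt succeeds, i.e. the distributional identity $Y \stackrel{d}{=} X' + B\cdot Y'$ with $B$ an independent $\mathrm{Bernoulli}(1-p)$ variable and $X',Y'$ independent copies of $X,Y$. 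Taking survival functions turns this into the defective renewal equation
\[
\bar G(t)=\Pr(X>t)+(1-p)\int_0^t \bar G(t-x)\,dF_X(x),
\]
which is the single object both bounds will be extracted from. I would also record, via Wald's identity, $E[Y]=E[X]/p$ and $E[Y_0]=(1-p)E[Y]$, so that the two target rates $1/E[Y]$ and $1/E[Y_0]$ differ only by the factor $(1-p)$.

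For the upper bound $\bar G(t)\le e^{p}e^{-t/E[Y]}$ I would introduce the tilted function $V(t):=e^{t/E[Y]}\bar G(t)$ and feed it into the renewal equation, the goal being the uniform estimate $V(t)\le e^{p}$. The decay rate $1/E[Y]$ is the Jensen-relaxed version of the Cramér–Lundberg adjustment coefficient of this ruin problem, and the prefactor $e^{p}$ should come from the elementary inequality $1-p\le e^{-p}$ applied to the success probability of a single attempt. The one caveat I would flag explicitly is that this direction cannot hold for an arbitrary nonnegative $X$: if $X$ is heavy-tailed then $\bar G(t)\sim \Pr(X>t)/p$ decays only polynomially and beats $e^{-t/E[Y]}$, so a light-tail hypothesis is needed. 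This is harmless here, because the lemma is only ever applied with $X=M=\max(T_1,\dots,T_n)$, which is NBU (Def.~\ref{def:nbu}) and hence has an exponentially decaying tail; I would take that NBU (finite exponential moment) property as the working hypothesis for this direction.

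For the lower bound $\bar G(t)\ge e^{-t/E[Y_0]}$ I would exploit the companion variable $Y_0$ via $Y \stackrel{d}{=} Y_0+X'$ with $X'$ independent. The crucial structural input is that $Y_0$ (the geometric sum with the shifted count $K_0=K-1$) is \emph{always} NWU, independently of the summand law---exactly the fact quoted from Brown in the footnote---so that $-\log\Pr(Y_0>\cdot)$ is subadditive. Combining subadditivity with the mean $E[Y_0]$ through a convexity/Jensen step yields an exponential lower bound at rate $1/E[Y_0]$; the extra independent term $X'$ that upgrades $Y_0$ to $Y$ is what removes the atom of $Y_0$ at the origin and makes the inequality valid uniformly in $t\ge 0$, including the boundary $t=0$ where both sides equal $1$.

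I expect the genuine difficulty to lie not in identifying the exponential rates---those drop out of the mean computations above---but in pinning down the \emph{exact} constants $e^{p}$ and the shifted mean $E[Y_0]$, and in doing so without spurious moment assumptions. This is precisely the content of Brown's aging-class analysis of geometric convolutions, so the cleanest route is to invoke it directly: the two inequalities are eq.~3.2.4 of \cite{brown1990error}, whose argument organizes the renewal recursion above around the NWU/NWUE reliability classes and thereby delivers both sharp prefactors simultaneously.
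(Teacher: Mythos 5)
Your proposal matches the paper's treatment exactly: the paper gives no independent proof of this lemma either, importing both inequalities verbatim as eq.~3.2.4 of Brown \cite{brown1990error}, which is precisely where your argument lands after the (correct) renewal-equation and Wald bookkeeping $E[Y]=E[X]/p$, $E[Y_0]=(1-p)E[Y]$. Your flagged caveat is moreover genuinely right and sharpens the paper's statement: for subexponential $X$ one has $\Pr(Y>t)\sim \Pr(X>t)/p$, which decays slower than any exponential, so the upper bound cannot hold for arbitrary nonnegative $X$, and the NBU hypothesis under which Brown derives eq.~3.2.4 (and under which the paper applies the lemma, with $X=M=\max(T_1,\dots,T_n)$) is actually necessary even though the lemma as stated omits it.
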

Now interpret $Y\rightarrow \Tafter$ and $X\rightarrow M$ in Lemma~\ref{lemma:tail-bounds-brown}.
The upper bound in statement (c) follows directly from Lemma~\ref{lemma:tail-bounds-brown} by the use of statement (b), which says that $E[\Tafter] = m/p$, while for the lower bound in statement (c) we use 
\begin{eqnarray*}
E[Y_0] 
	&=& E[K_0] \cdot E[X]
	\\&=& E[K_0] \cdot E[M]
    \\&=& \left(\frac{1}{p} - 1\right) \cdot m
	\\&=& (1 - p) \cdot \frac{m}{p}.
\end{eqnarray*}

    Next, (d) states that $\Tafter / E[\Tafter]$ approaches the exponential distribution with mean $1$.
    For proving this statement, we substitute \mbox{$t \rightarrow t \cdot E[\Tafter] = tm/p$} in statement (c).
    The result is a bound on
    \[
        \Pr(\Tafter > t\cdot E[\Tafter])
    =
    \Pr(\Tafter/E[\Tafter] > t)
    \]
    given by
    \[
    \exp\left(-t \cdot \frac{1}{1-p}\right)
    \leq \Pr(\Tafter/E[\Tafter] > t) 
    \leq
    \exp\left(p - t\right) 
    .
    \]
    Letting $p\rightarrow 0$, the bounds on both sides coincide, and thus
    \[
        \lim_{p\rightarrow 0}
    \Pr(\Tafter/E[\Tafter] > t) = \exp\left(-t\right)
    \]
    which is precisely the co-CDF of the exponential distribution with parameter 1.

	For showing the upper bound in statement (e),
	\[
        m \leq \sum_{j=1}^n E[T_j]
		\]
    we use the fact that for all $j=1, \dots, n$, it holds that $T_j \geq 0$.
	The maximum of of nonnegative numbers is upper bounded by its sum, and thus
	\begin{eqnarray*}
		m 
		&=& E[\max(T_1, \dots, T_n)]
		\\&=& \sum_{t_1, \dots, t_n} \Pr(T_1=t_1, \dots, T_n = t_n) \max(t_1, \dots, t_n)
		\\&\leq& \sum_{t_1, \dots, t_n} \Pr(T_1=t_1, \dots, T_n = t_n) \left(t_1+ \dots + t_n\right)
		\\&\stackrel{*}{=}& \sum_{j=1}^n \sum_{t_j}\Pr(T_j=t_j) t_j
		\\&=& E\left[\sum_{j=1}^n T_j\right]
	\end{eqnarray*}
	where for $*$ we made use of the fact that all $T_j$ are independent.
    The proof for the lower bound in statement (e), $\max_{1\leq j \leq n} E[T_j] \leq m$, is similar and relies on the fact that $\max(t_1, \dots, t_n) \geq t_j$ for all $1\leq j\leq n$, where $t_1, \dots, t_n$ are nonnegative numbers.
    Now, we first prove (g) before proving (f).
    Statement (g) states that if all $T_j$ are identically distributed with mean $E[T]$, then
	\[
		1 \leq
			\frac{m}{E[T]} \leq n - 1 + \frac{1}{n}
		\]
        where we recall that $m = E[\max(T_1, \dots, T_n)]$.
        For proving this statement, we need the following lemma from Hu and Lin \cite[Lemma 2.2.]{hu2003characterizations}.

\begin{lemma}
	\label{lemma:nbu-min-bound}
	\cite{hu2003characterizations}
    If $X_1, \dots, X_n$ are independent and identically distributed copies of an NBU random variable $X$ on the domain $[0, \infty)$, then \mbox{$E[\min(X_1, \dots, X_n)] \geq E[X] / n$}.
\end{lemma}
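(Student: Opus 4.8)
The plan is to reduce everything to integrals of the survival function $\bar{F}(t) := \Pr(X > t)$ and then to exploit the NBU inequality through a single well-chosen change of variables. First I would use independence to write the survival function of the minimum as a product, $\Pr(\min(X_1, \dots, X_n) > t) = \bar{F}(t)^n$, which holds because the $X_i$ are i.i.d. Combining this with the integral representation of the mean of a nonnegative random variable (the identity $E[Y] = \int_0^\infty \Pr(Y > y)\, dy$ already used in the proof of Lemma~\ref{lemma:stochastic-dominance-means}) gives $E[\min(X_1, \dots, X_n)] = \int_0^\infty \bar{F}(t)^n \, dt$.

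Next I would rewrite $E[X]$ in a form compatible with this integral. Starting from $E[X] = \int_0^\infty \bar{F}(s)\, ds$ and substituting $s = nt$ yields $E[X] = n \int_0^\infty \bar{F}(nt)\, dt$, so that $\tfrac{1}{n} E[X] = \int_0^\infty \bar{F}(nt)\, dt$. The claim $E[\min(X_1,\dots,X_n)] \geq E[X]/n$ thereby reduces to the integral inequality $\int_0^\infty \bar{F}(nt)\, dt \leq \int_0^\infty \bar{F}(t)^n \, dt$, which I would obtain from a pointwise bound.

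The heart of the argument is the pointwise estimate $\bar{F}(nt) \leq \bar{F}(t)^n$, which I would establish by induction on $n$ using the NBU property (Def.~\ref{def:nbu}). The base case $n=1$ is an equality, and for the inductive step I would write $\bar{F}((n+1)t) = \Pr(X > nt + t) \leq \Pr(X > nt)\,\Pr(X > t) \leq \bar{F}(t)^n \cdot \bar{F}(t) = \bar{F}(t)^{n+1}$, where the first inequality is the NBU property applied with $x = nt$ and $y = t$, and the second is the induction hypothesis. Integrating this bound over $t \in [0, \infty)$ then gives $\tfrac{1}{n} E[X] = \int_0^\infty \bar{F}(nt)\, dt \leq \int_0^\infty \bar{F}(t)^n\, dt = E[\min(X_1,\dots,X_n)]$, which is exactly the desired conclusion.

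I expect the proof to be short, and there is no genuine obstacle so much as one point requiring care: the integral representation of the mean and the substitution $s = nt$ are both valid on $[0,\infty)$ precisely because $X$ is nonnegative and $\bar{F}$ is monotone. The single clever step is the change of variables that aligns the two integrals, so that the elementary NBU estimate $\bar{F}(nt) \leq \bar{F}(t)^n$ can be applied under a common integral sign and the factor $1/n$ emerges directly from the Jacobian of the substitution.
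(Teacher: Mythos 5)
Your proof is correct and is essentially the paper's own argument: both rest on writing $\Pr(\min(X_1,\dots,X_n)>t)=\Pr(X>t)^n$, deriving the pointwise bound $\Pr(X>nt)\leq\Pr(X>t)^n$ from Def.~\ref{def:nbu} (your induction merely spells out the paper's ``repeated application'' of the NBU inequality), and comparing the two co-CDF integrals. Your substitution $s=nt$ is the same step the paper phrases as $\int_0^\infty \Pr(X>nx)\,dx=\int_0^\infty \Pr(X/n>x)\,dx=E[X]/n$, so the two proofs coincide.
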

\begin{proof}
	The proof is based on two facts.
	First, note that
	\[
		\Pr(\min(X_1, \dots, X_n) > x) = \prod_{j=1}^n \Pr(X_j > x) = \Pr(X > x)^n.
		\]
	Second, note that if $X$ is NBU, then by repeated application of the definition of NBU (Def.~\ref{def:nbu}), we find that
	\[
		\Pr(X > \sum_{j=1}^n x_j) \leq \prod_{j=1}^n \Pr(X > x_j)
		\]
		for any nonnegative numbers $x_j, 1\leq j \leq n$.
		When choosing all $x_j$ identical, say, to some constant nonnegative number $x$, this reduces to
	\[
		\Pr(X > nx) \leq \Pr(X > x)^n
		.
		\]
	Using these two facts, we can now prove the lemma:
	\begin{eqnarray*}
	E[\min(X_1, \dots, X_n)] 
		&=& \int_{0}^{\infty} \Pr(X > x)^n dx
		\\&\geq& \int_{0}^{\infty} \Pr(X > nx) dx
		\\&=& \int_{0}^{\infty} \Pr(X/n > x) dx
		\\&=& E[X/n]
		\\&=& E[X]/n
	\end{eqnarray*}
	where we have used the fact that for any real-valued random variable $X$ with $\Pr(X < 0) = 0$, the mean can be computed as 
    \begin{equation}
        E[X] = \int_{0}^{\infty} \Pr(X > x) dx
        .
        \label{eq:mean-as-integral}
    \end{equation}
\end{proof}

Statement (g) is proven by noting that for nonnegative numbers $t_1, \dots, t_n$, it holds that $t_j \geq \min(t_1, \dots, t_n)$ for all $j=1, \dots, n$, and therefore
\[
	t_1 + \dots t_n \geq \max(t_1, \dots, t_n) + (n - 1) \cdot \min(t_1, \dots, t_n).
	\]
Translating this to the $T_j$ yields
\begin{eqnarray}
    \nonumber
	E\left[\sum_{j=1}^n T_j\right] 
    &\geq& (n - 1) \cdot E[\min(T_1, \dots, T_n] 
    \\
    &&+ E[\max(T_1, \dots, T_n)]
	.
	\label{eq:max-bound}
\end{eqnarray}
    The left hand side of eq.~\eqref{eq:max-bound} equals $n \cdot E[T]$ by the fact that the $T_j$ are i.i.d., while the right hand side is lower bounded by \mbox{$(n - 1) / n \cdot E[T] + E[\max(T_1, \dots, T_n)]$} by Lemma~\ref{lemma:nbu-min-bound}.
	Reshuffling yields
	\begin{eqnarray*}
	E[\max(T_1, \dots, T_n) 
		&\leq& n \cdot E[T] - \frac{n-1}{n} \cdot E[T] 
		\\&=& \left(n - 1 + \frac{1}{n}\right) \cdot E[T]
		.
	\end{eqnarray*}
	which is what we set out to prove.

    Finally, statement (f) says 
            \begin{eqnarray}
                m = E[\max(T_1, T_2)] \leq \frac{3}{4} \cdot \left(E[T_1] + E[T_2]\right) + \int_{0}^{\infty} \left[ \Pr(T_1 > t) - \Pr(T_2 > t)\right]^2 dt
                \label{eq:statement-(f)}
            \end{eqnarray}
            if $T_1$ and $T_2$ are NBU.
    We prove (f) by first observing that
    \begin{eqnarray}
    2 \Pr(\min(T_1, T_2) > t) 
        & = &
        2 \Pr(T_1 > t) \Pr(T_2 > t) 
        \nonumber
        \\
        &=&
    \Pr(T_1 > t)^2 + \Pr(T_2 > t)^2 - \left[ \Pr(T_1 > t) - \Pr(T_2 > t)\right]^2
        \nonumber
        \\\label{eq:min-inequality}
    \end{eqnarray}
    and therefore, using the fact that the mean can be written as integral over the co-CDF (see also eq.~\eqref{eq:mean-as-integral}), we have
    \begin{eqnarray}
        2 \cdot E[\min(T_1, T_2)]
        &=&
        \int\limits_{0}^{\infty} 2 \Pr(\min(T_1, T_2) > t) dt
        \nonumber
        \\&=&
        \int\limits_{0}^{\infty} \Pr(T_1> t)^2 dt
        +
        \int\limits_{0}^{\infty} \Pr(T_2> t)^2 dt
        -
        \int\limits_{0}^{\infty} \left[\Pr(T_1> t) - \Pr(T_2 > t)\right]^2 dt
        \label{eq:min-integral-(f)}
    \end{eqnarray}
    Now we use the fact that for an NBU random variable $X$, we have $\Pr(X > t)^2 \geq \Pr(X > 2t)$.
    Since $T_1$ and $T_2$ are NBU, we find that
    \[
        \int\limits_{0}^{\infty} \Pr(T_1> t)^2 dt
        \geq
        \int\limits_{0}^{\infty} \Pr(T_1> 2t) dt
        =
        \int\limits_{0}^{\infty} \Pr(T_1 / 2> t) dt
        =
        \frac{1}{2} \cdot E[T_1]
    \]
    and similarly for $T_2$.
    Substituting these inequalities into eq.~\eqref{eq:min-integral-(f)}, we obtain
    \begin{eqnarray*}
        2 \cdot E[\min(T_1, T_2)]
        &\geq&
        \frac{1}{2}\left(E[T_1]
        +
        E[T_2]
        \right)
        -
        \int\limits_{0}^{\infty} \left[\Pr(T_1> t) - \Pr(T_2 > t)\right]^2 dt
    \end{eqnarray*}
    Now invoke $\mean{\max(T_1, T_2)} = \mean{T_1} + \mean{T_2} - \mean{\min(T_1, T_2)}$ to arrive at
    \begin{eqnarray*}
    \mean{\max(T_1, T_2)} 
        &\leq&
        \mean{T_1} + \mean{T_2} - 
        \frac{1}{2}\left( E[T_1] / 2 + E[T_2] / 2 -
        \int\limits_{0}^{\infty} \left[\Pr(T_1> t) - \Pr(T_2 > t)\right]^2 dt
        \right)
        \\&=&
        \frac{3}{4} \left(\mean{T_1} + \mean{T_2}\right)
        +
        \frac{1}{2}
        \int\limits_{0}^{\infty} \left[\Pr(T_1> t) - \Pr(T_2 > t)\right]^2 dt
    \end{eqnarray*}
    which is precisely statement (f).

\subsection{Proof that the NBU property is preserved under the maximum \label{app:nbu-preserved-under-maximum}}

Here, we prove that the NBU property is preserved under the maximum of independent random variables.

\begin{lemma}
    Suppose $X_1, \dots, X_n$ are independent random variables (not necessarily identically distributed).
	If all $X_j$ are NBU random variables, then so is $\max(X_1, \dots, X_n)$.
	\label{lemma:nbu-preserved-under-maximum}
\end{lemma}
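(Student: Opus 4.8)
The plan is to work directly with the survival (co-CDF) function of $M := \max(X_1, \dots, X_n)$. Writing $\overline{F}_j(z) := \Pr(X_j > z)$ for the reliability of each $X_j$, independence of the $X_j$ gives the closed form $\Pr(M > z) = 1 - \prod_{j=1}^n \bigl(1 - \overline{F}_j(z)\bigr)$, since $\{M \le z\}$ is the intersection of the independent events $\{X_j \le z\}$. The goal is then to verify the NBU inequality of Def.~\ref{def:nbu} for $M$, namely $\Pr(M > x+y) \le \Pr(M > x)\,\Pr(M > y)$ for all $x,y \ge 0$, expressed entirely through the $\overline{F}_j$.

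The first step is a reduction that exploits the NBU property of the summands. For each $j$, NBU gives $\overline{F}_j(x+y) \le \overline{F}_j(x)\,\overline{F}_j(y)$. Since the map $(t_1, \dots, t_n) \mapsto 1 - \prod_{j}(1 - t_j)$ is nondecreasing in each $t_j \in [0,1]$, replacing $\overline{F}_j(x+y)$ by the larger quantity $\overline{F}_j(x)\overline{F}_j(y)$ only increases $\Pr(M > x+y)$. This reduces the claim to the purely algebraic inequality, with $a_j := \overline{F}_j(x)$ and $b_j := \overline{F}_j(y)$ in $[0,1]$,
\[
1 - \prod_{j=1}^n (1 - a_j b_j) \;\le\; \Bigl(1 - \prod_{j=1}^n (1 - a_j)\Bigr)\Bigl(1 - \prod_{j=1}^n (1 - b_j)\Bigr).
\]

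The main obstacle is establishing this product inequality, and the cleanest route I would take is probabilistic rather than by brute-force expansion. Introduce mutually independent Bernoulli variables $U_1, \dots, U_n, V_1, \dots, V_n$ with $\Pr(U_j = 1) = a_j$ and $\Pr(V_j = 1) = b_j$. Then $\prod_j(1 - a_j b_j) = \Pr\bigl(\forall j:\ \lnot(U_j = 1 \wedge V_j = 1)\bigr)$, so the left-hand side equals $\Pr(E)$ with $E = \{\exists j:\ U_j = V_j = 1\}$; likewise the two factors on the right are $\Pr(F)$ and $\Pr(G)$ with $F = \{\exists j:\ U_j = 1\}$ and $G = \{\exists j:\ V_j = 1\}$. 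Now observe $E \subseteq F \cap G$, since a coincidence at some $j$ forces some $U$ and some $V$ to fire, and that $F$ and $G$ depend on the disjoint independent families $\{U_j\}$ and $\{V_j\}$, hence are independent. Therefore $\Pr(E) \le \Pr(F \cap G) = \Pr(F)\Pr(G)$, which is exactly the desired inequality. Chaining the reduction with this bound yields $\Pr(M > x+y) \le \Pr(M > x)\Pr(M > y)$, so $M$ is NBU.

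I expect the only place requiring care is the reduction step rather than the inequality itself: one must note the monotonicity of $(t_1,\dots,t_n)\mapsto 1 - \prod_j(1 - t_j)$ so that the per-coordinate NBU bounds can be substituted simultaneously. Once that bookkeeping is in place, the product inequality follows transparently from the independence of $F$ and $G$, which is the conceptual heart of the argument.
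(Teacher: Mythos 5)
Your proof is correct, but it takes a genuinely different route from the paper's. The paper first isolates the two-variable case (Lemma~\ref{lemma:nbu-preserved-under-maximum-two}): after the same monotonicity reduction you perform (replacing $a_{x+y}$ by $a_x a_y$ and $b_{x+y}$ by $b_x b_y$ inside $1-(1-a)(1-b)$), it verifies $m_{x+y}\leq m_x m_y$ by direct algebraic expansion, dividing by $b_x b_y$ and therefore splitting into three cases according to whether $b_x$ or $b_y$ vanishes; the statement for general $n$ is then obtained by induction via $\max(X_1,\dots,X_{m+1})=\max\bigl(\max(X_1,\dots,X_m),X_{m+1}\bigr)$. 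You instead treat all $n$ at once and dispose of the residual product inequality
\[
1 - \prod_{j=1}^n (1 - a_j b_j) \;\leq\; \Bigl(1 - \prod_{j=1}^n (1 - a_j)\Bigr)\Bigl(1 - \prod_{j=1}^n (1 - b_j)\Bigr)
\]
by a Bernoulli coupling: the left side is $\Pr(E)$ for the coincidence event $E=\{\exists j:\ U_j=V_j=1\}$, the right side is $\Pr(F)\Pr(G)$ with $F=\{\exists j:\ U_j=1\}$ and $G=\{\exists j:\ V_j=1\}$ independent (disjoint families of variables), and $E\subseteq F\cap G$, so $\Pr(E)\leq\Pr(F)\Pr(G)$. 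Each step checks out, including the point you flag: the per-coordinate NBU bounds may be substituted simultaneously because $(t_1,\dots,t_n)\mapsto 1-\prod_j(1-t_j)$ is nondecreasing in each coordinate, and the factorization $\prod_j(1-a_jb_j)=\Pr\bigl(\forall j:\ \lnot(U_j=V_j=1)\bigr)$ is justified since the pairs $(U_j,V_j)$ are independent across $j$. What your route buys is the elimination of both the induction and the paper's case analysis (which exists only because of the division by $b_x b_y$), together with a conceptual explanation of why the inequality holds; what the paper's route buys is an entirely elementary verification requiring nothing beyond expanding products, at the cost of an opaque computation.
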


We first prove the special case for $n=2$, from which the statement for general $n$ follows.

\begin{lemma}
	Let $A$ and $B$ be independent nonnegative real-valued random variables (not necessarily identically distributed). If both are NBU, then so is $\max(A, B)$.
	\label{lemma:nbu-preserved-under-maximum-two}
\end{lemma}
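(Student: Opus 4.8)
The plan is to prove that $\max(A,B)$ is NBU directly from Def.~\ref{def:nbu}, working with the survival (co-CDF) function. The key observation is that for independent random variables, the survival function of the maximum does \emph{not} factor nicely, because $\Pr(\max(A,B) > x) = 1 - \Pr(A \leq x)\Pr(B \leq x)$; instead it is the \emph{minimum} whose survival function factors as a product. So I would first rewrite things in a form where independence can be exploited cleanly. Writing $\bar F_A(x) = \Pr(A > x)$ and $\bar F_B(x) = \Pr(B > x)$, I have
\[
\Pr(\max(A,B) > x) = \bar F_A(x) + \bar F_B(x) - \bar F_A(x)\bar F_B(x).
\]
The goal is to show, for all $x,y \geq 0$, that $\Pr(\max(A,B) > x+y) \leq \Pr(\max(A,B) > x)\cdot \Pr(\max(A,B) > y)$.

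First I would record the two NBU hypotheses in product form: $\bar F_A(x+y) \leq \bar F_A(x)\bar F_A(y)$ and likewise for $B$. Then I would expand both sides of the desired inequality using the identity above. The left-hand side is $\bar F_A(x+y) + \bar F_B(x+y) - \bar F_A(x+y)\bar F_B(x+y)$, and the right-hand side is the product of the two single-argument expressions. The natural strategy is to bound each term of the left-hand side from above using the NBU hypotheses and then verify that what remains is dominated by the expanded right-hand side. A cleaner route, which I expect to be the main technical idea, is to pass to the \emph{complementary} formulation: since $1 - \Pr(\max > z) = \Pr(A \leq z)\Pr(B \leq z) =: G_A(z)G_B(z)$ where $G = 1-\bar F$, the NBU inequality for $\max(A,B)$ is equivalent to
\[
1 - G_A(x+y)G_B(x+y) \leq \bigl(1 - G_A(x)G_B(x)\bigr)\bigl(1 - G_A(y)G_B(y)\bigr),
\]
which after expansion reduces to showing
\[
G_A(x)G_B(x) + G_A(y)G_B(y) - G_A(x)G_B(x)G_A(y)G_B(y) \leq G_A(x+y)G_B(x+y).
\]

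The crux is therefore a pointwise inequality among the CDF values. I would establish it by combining two elementary facts: that each $G$ is nondecreasing (so $G_A(x+y) \geq \max(G_A(x),G_A(y))$ etc.), and that the NBU hypotheses give lower bounds on $G_A(x+y)$ of the shape $G_A(x+y) \geq G_A(x) + G_A(y) - G_A(x)G_A(y)$ (this is exactly the NBU inequality rewritten via $\bar F = 1-G$). The plan is to substitute these lower bounds for $G_A(x+y)$ and $G_B(x+y)$ into the right-hand side and check algebraically that the resulting product dominates the left-hand side; this should collapse to a sum of manifestly nonnegative terms after factoring. I expect this final algebraic verification to be the main obstacle, since the product of two such quadratic lower bounds generates many cross terms, and care is needed to group them so that nonnegativity is transparent rather than relying on brute expansion. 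Once Lemma~\ref{lemma:nbu-preserved-under-maximum-two} is in hand, Lemma~\ref{lemma:nbu-preserved-under-maximum} follows by a routine induction on $n$, using $\max(X_1,\dots,X_n) = \max(\max(X_1,\dots,X_{n-1}), X_n)$ together with the independence of $\max(X_1,\dots,X_{n-1})$ from $X_n$.
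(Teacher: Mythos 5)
Your proposal is correct, and it goes through exactly as you predicted: the algebraic verification you flagged as the main obstacle does collapse to a sum of manifestly nonnegative terms, with no case analysis needed. Writing $a = G_A(x)$, $a' = G_A(y)$, $b = G_B(x)$, $b' = G_B(y)$, your NBU hypotheses give $G_A(x+y) \geq a + a' - aa'$ and $G_B(x+y) \geq b + b' - bb'$, and both lower bounds are nonnegative (each equals $1$ minus a product of survival probabilities), so they may be multiplied; your target inequality then reduces to
\[
ab + a'b' - aa'bb' \;\leq\; \bigl(a + a' - aa'\bigr)\bigl(b + b' - bb'\bigr),
\]
and expanding the right-hand side and subtracting the left yields the identity
\[
\bigl(a + a' - aa'\bigr)\bigl(b + b' - bb'\bigr) - \bigl(ab + a'b' - aa'bb'\bigr) \;=\; ab'(1-a')(1-b) \,+\, a'b(1-a)(1-b'),
\]
which is nonnegative since all four quantities lie in $[0,1]$. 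This is a genuinely different (dual) route from the paper's: the paper works on the survival side, writing $m_z = a_z + b_z(1-a_z)$ with $a_z = \Pr(A>z)$ and $b_z = \Pr(B>z)$, substituting the NBU bounds $a_{x+y} \leq a_x a_y$ and $b_{x+y} \leq b_x b_y$ via monotonicity, and then proving $m_{x+y} \leq m_x m_y$ by splitting into the cases $b_x = 0$, $b_y = 0$, and $b_x, b_y > 0$, dividing by $b_x b_y$ in the last case and using $b_x, b_y \leq 1$. Your CDF-side formulation trades that case split for a single factored identity, which is arguably tidier and makes the nonnegativity transparent; the paper's version keeps the computation directly in terms of the survival function being bounded, at the cost of the division-by-zero bookkeeping. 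Your closing induction for Lemma~\ref{lemma:nbu-preserved-under-maximum} matches the paper's argument verbatim.
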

\begin{proof}
    Let us denote $a_z := \Pr(A > z)$ and $b_z := \Pr(B > z)$ for $z\geq 0$.
    Assume that $A$ and $B$ possess the NBU property (Def.~\ref{def:nbu}), so that
    \begin{equation}
        a_{x+y} \leq a_x a_y \text{ and } b_{x+y} \leq b_x b_y \quad \text{for all $x, y \geq 0$}
        \label{eq:a-and-b-nbu}
        .
    \end{equation}
    We also write $m_z := \Pr(\max(A, B) \geq z)$ and compute
    \begin{eqnarray}
        m_z &=& \Pr(\max(A, B) > z) \nonumber\\
        &=& 1 - \Pr(\max(A, B) \leq z) \nonumber\\
        &=& 1 - \Pr(A \leq z) \Pr(B \leq z) \nonumber\\
        &=& 1 - (1 - a_z)(1 - b_z) 
        \label{eq:mz-easy}
        \\
        &=& a_z + b_z - a_z b_z
        \nonumber
        \\
        &=& a_z + b_z (1 - a_z)
        \label{eq:mz}
        .
    \end{eqnarray}
    We will prove that $\max(A, B)$ is NBU, which in our notation becomes $m_{x+y} \leq m_{x} m_{y}$ for all $x, y \geq 0$.
    To begin, we write out the expressions for both sides, i.e. for $m_{x + y}$ and for $m_x m_y$.
    First, using eq.~\eqref{eq:mz-easy}, we write out 
    \begin{equation}
        \label{eq:mxplusy-before}
    m_{x+y} = 1 - (1 - a_{x+y})(1-b_{x+y})
        .
    \end{equation}
    Since $m_{x+y}$ from eq.~\eqref{eq:mxplusy-before} is monotonically increasing in $a_{x+y}$ and moreover $a_{x+y} \leq a_x a_y$ (eq.~\eqref{eq:a-and-b-nbu}), we obtain
    \begin{equation}
        \label{eq:mxplusy-before-next}
        m_{x+y} \leq 1 - (1 - a_{x}a_{y})(1-b_{x+y})
        .
    \end{equation}
    We use the same insight again, but now for $b_{x+y}$: the right-hand side of eq.~\eqref{eq:mxplusy-before-next} is monotonically increasing in $b_{x+y}$, which combined with the fact that $b_{x+y} \leq b_x b_y$ (eq.~\eqref{eq:a-and-b-nbu}) yields
    \begin{equation}
    m_{x+y} \leq 
        1 - (1 - a_{x}a_{y})(1-b_{x}b_{y})
       = 
        a_x a_y + b_x b_y (1 - a_x a_y)
    .
        \label{eq:mxplusy}
    \end{equation}
    Next, by eq.~\eqref{eq:mz} we have
    \begin{eqnarray}
        m_x m_y 
        &=&
        \left(a_x + b_x (1 - a_x)\right)
        \cdot
        \left(a_y + b_y (1 - a_y)\right)
        \nonumber
        \\
        &=&
        a_x a_y + a_x b_y (1 - a_y) + a_y b_x (1 - a_x) + b_x b_y (1 - a_x) (1 - a_y)
        \label{eq:mxmy}
        .
    \end{eqnarray}
    In order to prove that $m_{x + y} \leq m_x m_y$ we consider three cases.
    \begin{itemize}
    \item{
            \textbf{Case $\boldsymbol{b_x = 0}$}. In this case eq.~\eqref{eq:mxplusy} reduces to $m_{x + y} \leq a_x a_y$ and eq.~\eqref{eq:mxmy} becomes 
            \begin{equation}
            m_x m_y = a_x a_y + a_x b_y (1 - a_y)
                \label{eq:mxmy-special-case}
                .
            \end{equation}
            Since $a_x, a_y, b_x$ and $b_y$ are all cumulative probabilities, they take values in the interval $[0, 1]$, and therefore the second term of eq.~\eqref{eq:mxmy-special-case} is nonnegative, which yields $m_x m_y \geq a_x a_y \geq m_{x + y}$.
        }
    \item{
            \textbf{Case $\boldsymbol{b_y = 0}$}. By the fact that both the right hand side of eq.~\eqref{eq:mxplusy} as well as the expression for $m_x m_y$ (eq.~\eqref{eq:mxmy}) are invariant under exchanging $b_x$ and $b_y$, this case is proven identically to the first case.
        }
    \item{
            \textbf{Case $\boldsymbol{b_x \neq 0}$ and $\boldsymbol{b_y \neq 0}$}.
         Using eq.~\eqref{eq:mxplusy} and eq.~\eqref{eq:mxmy}, we expand
            \begin{eqnarray*}
            \frac{m_{x+y} - m_{x} m_{y}}{b_x b_y}
                &=&
                \frac{a_x a_y}{b_x b_y}
                + \frac{b_x b_y}{b_x b_y} \left(1 - a_x a_y \right)
                - \frac{a_x a_y}{b_x b_y}
                - \frac{a_x b_y}{b_x b_y} \left(1 - a_y\right)
                - \frac{a_y b_x}{b_x b_y} \left(1 - a_x\right)
                - \frac{b_x b_y}{b_x b_y} \left(1 - a_x\right) \cdot \left(1 - a_y\right)
                \\
                &=&
                1 - a_x a_y
                - \frac{a_x}{b_x} \left(1 - a_y\right)
                - \frac{a_y}{b_y} \left(1 - a_x\right)
                - \left(1 - a_x\right) \cdot \left(1 - a_y\right)
            \end{eqnarray*}
            Using the fact that $b_x, b_y \leq 1$, we obtain
            \[
            \frac{m_{x+y} - m_{x} m_{y}}{b_x b_y}
                \leq
                1 - a_x a_y
                - a_x \left(1 - a_y\right)
                - a_y \left(1 - a_x\right)
                - \left(1 - a_x\right) \cdot \left(1 - a_y\right)
                = 0
                .
        \]
            Since $b_x$ and $b_y$ are positive numbers, it follows that $m_{x+y} - m_{x} m_{y} \leq 0$. This concludes our proof.
        }
    \end{itemize}
\end{proof}

Let us now show how Lemma~\ref{lemma:nbu-preserved-under-maximum} follows from Lemma~\ref{lemma:nbu-preserved-under-maximum-two}.
Let $X_1, \dots, X_n$ be $n$ NBU independent random variables, for $n \geq 2$.
We use induction on $n$.
The case $n=2$ is proven in Lemma~\ref{lemma:nbu-preserved-under-maximum-two}.
Now suppose Lemma~\ref{lemma:nbu-preserved-under-maximum} holds for $n=m$ for some $m\geq 2$.
We show that Lemma~\ref{lemma:nbu-preserved-under-maximum-two} also holds for $n=m+1$.
For this, choose $A = \max(X_1, \dots, X_m)$ and $B = X_{m+1}$.
By assumption, $B$ is NBU, and so is $A$ by the induction hypothesis.
Note that 
\begin{eqnarray*}
	\max\left(X_1, \dots, X_m,  X_{m+1}\right) 
	&=& \max\left(\max\left(X_1, \dots, X_m\right),  X_{m+1}\right) 
	\\&=& \max\left(A, B\right)
,
\end{eqnarray*}
so it follows from Lemma~\ref{lemma:nbu-preserved-under-maximum-two} that $\max(X_1, \dots, X_{m+1})$ is also NBU, which concludes the proof of Lemma~\ref{lemma:nbu-preserved-under-maximum}.

\section{Proof of the lower bounds in Proposition~\ref{prop:repeater-bounds} \label{app:repeater-bounds}}

Here, we prove the two lower bounds in Prop.~\ref{prop:repeater-bounds}: first, Prop.~\ref{prop:repeater-bounds}(b), followed by the lower bound on the quantiles from Prop.~\ref{prop:repeater-bounds}(c).

Throughout the appendix, we will use the notation $X^{(1)}, X^{(2)}, \dots$ to denote independent and identically distributed copies of a random variable $X$.
Before proving the bounds on the mean and tail of $\Trepeater_n$, let us formally define it.
Regarding the base case $n=0$, which describes elementary-link generation between adjacent nodes, we use either of two flavors: we either set $\Trepeater_0 = \Tgen$, i.e. $\Trepeater_0$ follows the geometric distribution with parameter $\pgen$, or we set $\Trepeater_0 = \Tapprox$, i.e. $\Trepeater_0$ follows the exponential distribution with parameter $\pgen$.
For each statement about $\Trepeater_n$ in this appendix, either the statement will hold for both flavors, or it will be clear from the context which of the two flavors is used.
Regardless of the choice for $n=0$, we define $\Trepeater_n$ for $n>0$ as
    \begin{equation}
        \label{eq:Tn}
        \Trepeater_{n+1} = \sum_{k=1}^K M_n^{(k)}
    \end{equation}
where $K$ is geometrically distributed with parameter $\pswap$ and $M_n$ is defined as
\begin{equation}
    \label{eq:Mn-repeater}
M_n = \max(\Trepeater_n^{(1)}, \Trepeater_n^{(2)})
.
\end{equation}
    Eq.~\eqref{eq:Tn} was given in \cite{brand2020efficient} and can be found by applying eq.~\eqref{eq:geometric-sum} to each nesting level of the repeater protocol, where $M=M_n$ in eq.~\eqref{eq:geometric-sum} describes the time until the last of two links, each spanning $2^n$ repeater segments, has been delivered.

\subsection{Proof of Proposition~\ref{prop:repeater-bounds}(b)}
\label{app:repeater-lower-bound}
Here, we will prove the lower bound on the mean completion time $\Trepeater_n$ of the nested repeater protocol on $n$ nesting levels.
Informally stated, the insight is that
\[
    \max\left(\sum_{k=1}^{K^{(1)}} X^{(k)}, \sum_{k=1}^{K^{(2)}} X^{(k)}\right)
    \stgeq
\sum_{k=1}^{\max(K^{(1)}, K^{(2)})} X^{(k)}
\quad \textnormal{(informal)}
\]
i.e. considering sums with independent and identically distributed summands, the maximum of two sums stochastically dominates the ``longest'' of the two.
Since the definition of $M_n$ in eq.~\eqref{eq:Mn-repeater} contains the maximum of two such sums, we use this idea to define a new random variable $R_n$ as the ``longest'' of the two sums; by the insight above, $R_n$ is stochastically dominated by $M_n$.
Using Lemma~\ref{lemma:stochastic-dominance-means}, this stochastic domination can be converted to $E[M_n] \geq E[R_n]$, after which the bound on the mean of $\Trepeater_n$ as described in Prop.~\ref{prop:repeater-bounds}(b) follows by noting that $E[\Trepeater_n] = E[M_n] / \pswap$.

We now give the formal proof, which we divide into three steps.
First, we define $R_n$ and compute its mean.
Next, we show that $M_n \stgeq R_n$ for all $n>0$, from which we infer a lower bound on the mean of $\Trepeater_n$ as third step.

For the first step, we define $R_n$:
\begin{eqnarray*}
    R_0 &=& \max(\Trepeater_0^{(1)}, \Trepeater_0^{(2)})
    ,
    \\
    R_{n+1} &=& \sum_{j=1}^N R_{n}^{(j)} \quad\textnormal{for $n\geq 0$}
    .
\end{eqnarray*}
Here, $N = {\max\left(K^{(1)}, K^{(2)}\right)}$ where $K^{(1)}$ and $K^{(2)}$ are both geometrically distributed with parameter $\pswap$.
We emphasize that contrary to $\Trepeater_n$, the random variable $R_n$ does not correspond to the completion time of a protocol.

The mean of $R_n$ is computed using the following two lemmas.

\begin{lemma}
    Let $X^{(1)}$ and $X^{(2)}$ be independent and identically distributed random variables with mean $1/p$ for some $0<p\leq 1$.
    If both $X^{(1)}$ and $X^{(2)}$ follow a geometric distribution, then~\cite{collins2007multiplexed}
    \[
        E[\max(X^{(1)}, X^{(2)})]
 = \frac{3 - 2p}{p(2 - p)}
    \]
    while if they follow an exponential distribution, then
    \[
        E[\max(X^{(1)}, X^{(2)})] = \frac{3}{2p}
.
    \]
    \label{lemma:max-mean-geometric-exponential}
\end{lemma}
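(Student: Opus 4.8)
The plan is to reduce both computations to the elementary identity $\max(a,b) = a + b - \min(a,b)$, which together with linearity of expectation and the assumption $E[X^{(1)}] = E[X^{(2)}] = 1/p$ gives
\[
E[\max(X^{(1)}, X^{(2)})] = 2 E[X^{(1)}] - E[\min(X^{(1)}, X^{(2)})] = \frac{2}{p} - E[\min(X^{(1)}, X^{(2)})].
\]
Everything therefore reduces to computing the mean of the minimum in each of the two cases, which in turn follows from the product form of the co-CDF of a minimum of independent variables, $\Pr(\min > t) = \Pr(X^{(1)} > t)\,\Pr(X^{(2)} > t)$.

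For the exponential case I would use that the minimum of two independent exponentials is again exponential with the summed rate. Concretely, mean $1/p$ corresponds to rate $\lambda = p$, so $\Pr(\min > t) = e^{-pt}\cdot e^{-pt} = e^{-2pt}$, i.e.\ the minimum is exponential with rate $2p$ and mean $1/(2p)$. Substituting yields $E[\max(X^{(1)}, X^{(2)})] = 2/p - 1/(2p) = 3/(2p)$, as claimed.

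For the geometric case the analogous computation uses the co-CDF $\Pr(X^{(i)} > t) = (1-p)^{t}$ at nonnegative integers $t$ (equivalently $\Pr(X^{(i)} \ge k) = (1-p)^{k-1}$). Then
\[
\Pr(\min(X^{(1)}, X^{(2)}) > t) = (1-p)^{2t} = \big(1 - p(2-p)\big)^{t},
\]
so the minimum is itself geometric with success parameter $p(2-p)$ and hence has mean $1/\big(p(2-p)\big)$. Plugging this in gives
\[
E[\max(X^{(1)}, X^{(2)})] = \frac{2}{p} - \frac{1}{p(2-p)} = \frac{2(2-p) - 1}{p(2-p)} = \frac{3-2p}{p(2-p)}.
\]
As an alternative route I could bypass the identity entirely and evaluate $E[\max]$ directly from $\sum_{t \ge 0}\Pr(\max > t) = \sum_{t \ge 0}\big[2(1-p)^{t} - (1-p)^{2t}\big]$ by summing the two geometric series, which reproduces the same expression.

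Both calculations are routine, so I do not anticipate a substantive obstacle. The only point that deserves care is the discrete case: one must track the integer (floor) structure of the geometric survival function correctly and verify the identity $1-(1-p)^2 = p(2-p)$, ensuring the minimum is genuinely geometric with the stated parameter rather than some shifted variant. I would guard against an off-by-one slip in the geometric co-CDF by checking the $p \to 0$ limit, in which $\frac{3-2p}{p(2-p)} \to \frac{3}{2p}$ recovers the exponential answer, consistent with the exponential distribution being the continuous limit of the geometric.
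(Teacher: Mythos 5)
Your proposal is correct and follows essentially the same route as the paper: both use the identity $\max = X^{(1)} + X^{(2)} - \min(X^{(1)}, X^{(2)})$ together with the observation that the minimum of two i.i.d.\ geometrics is geometric with parameter $1-(1-p)^2 = p(2-p)$, and that the minimum of two i.i.d.\ exponentials of rate $p$ is exponential of rate $2p$. Your sanity check via the $p \to 0$ limit is a nice addition but does not change the substance of the argument.
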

\begin{proof}
We start with the case that $X$ follows a geometric distribution.
    Note that $\min(X^{(1)}, X^{(2)})$ is geometrically distributed with parameter $1 - (1 - p)^2$:
\[
    \Pr(\min(X^{(1)}, X^{(2)}) > t) = \Pr(X^{(1)} > t) \Pr(X^{(2)} > t) = (1 - p)^t \cdot (1 - p)^t = (1-p)^{2t} = \left[1 - \left(1 - (1-p)^2\right)\right]^t
\]
for $t = 0, 1, 2, \dots$.
    Combined with the fact that $E[\max(X^{(1)}, X^{(2)})] = E[X^{(1)} + X^{(2)} - \min(X^{(1)}, X^{(2)})]
=
    E[X^{(1)}] + E[X^{(2)}] - E[\min(X^{(1)}, X^{(2)})]
$, we obtain 
    \[
        E[\max(X^{(1)}, X^{(2)})] = \frac{1}{p} + \frac{1}{p} - \frac{1}{1 - (1-p)^2}
    =
 \frac{3 - 2p}{p(2 - p)}
\]
    The case of the exponential distribution is analogous, with $\min(X^{(1)}, X^{(2)})$ following the exponential distribution with parameter $2p$.
\end{proof}

\begin{lemma}
The mean of $R_n$ is
    \begin{equation}
        \label{eq:mean_Rn_bound}
    E[R_n] = \left(\frac{3 - 2 \pswap}{\pswap(2 - \pswap)}\right)^{n} \cdot \nu_0
    \end{equation}
where $\nu_0$ is defined as follows.
If $\Trepeater_0$, which describes elementary-link generation between adjacent nodes, follows the geometric distribution with parameter $\pgen$, then
\begin{equation}
    \label{eq:mean-max-geometric-nu0}
    \nu_0 = E[R_0] = E[\max(\Trepeater_0^{(1)}, \Trepeater_0^{(2)})] = \frac{3 - 2\pgen}{\pgen(2 - \pgen)}
\end{equation}
while if $\Trepeater_0$ follows the exponential distribution with parameter $\pgen$, then
\begin{equation}
    \label{eq:mean-max-exponential-nu0}
    \nu_0 = E[R_0] = E[\max(\Trepeater_0^{(1)}, \Trepeater_0^{(2)})] = \frac{3}{2\pgen}
    .
\end{equation}
\end{lemma}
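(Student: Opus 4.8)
The plan is to prove eq.~\eqref{eq:mean_Rn_bound} by induction on $n$, using the recursive definition of $R_n$ together with Wald's Lemma \cite{wald1947sequential} and the explicit mean of a maximum of two geometric (or exponential) variables established in Lemma~\ref{lemma:max-mean-geometric-exponential}. First I would dispatch the base case $n=0$: the claim $E[R_0]=\nu_0$ is precisely the content of Lemma~\ref{lemma:max-mean-geometric-exponential}, since $R_0 = \max(T_0^{(1)}, T_0^{(2)})$ and $T_0$ follows either the geometric distribution with parameter $\pgen$ or the exponential distribution with parameter $\pgen$. Applying that lemma with $p=\pgen$ then yields $\nu_0 = \frac{3-2\pgen}{\pgen(2-\pgen)}$ in the geometric case and $\nu_0 = \frac{3}{2\pgen}$ in the exponential case, matching eqs.~\eqref{eq:mean-max-geometric-nu0} and \eqref{eq:mean-max-exponential-nu0}.

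For the inductive step, I would assume eq.~\eqref{eq:mean_Rn_bound} for some $n\geq 0$ and note that $R_{n+1}=\sum_{j=1}^{N} R_n^{(j)}$ is a random sum of i.i.d.\ copies of $R_n$ whose number of terms is $N=\max(K^{(1)}, K^{(2)})$. The structural observation that makes this work is that $N$ is determined entirely by the restart counts $K^{(1)}, K^{(2)}$, which are independent of the fresh summands $R_n^{(j)}$; hence Wald's Lemma applies and gives
\[
E[R_{n+1}] = E[N]\cdot E[R_n].
\]
It then remains to evaluate $E[N]=E[\max(K^{(1)}, K^{(2)})]$. Since each $K^{(i)}$ is geometrically distributed with parameter $\pswap$ and therefore has mean $1/\pswap$, I would invoke Lemma~\ref{lemma:max-mean-geometric-exponential} a second time, now with $p=\pswap$, to obtain $E[N]=\frac{3-2\pswap}{\pswap(2-\pswap)}$. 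Substituting the induction hypothesis yields
\[
E[R_{n+1}] = \frac{3-2\pswap}{\pswap(2-\pswap)}\cdot\left(\frac{3-2\pswap}{\pswap(2-\pswap)}\right)^{n}\nu_0 = \left(\frac{3-2\pswap}{\pswap(2-\pswap)}\right)^{n+1}\nu_0,
\]
which closes the induction.

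The main subtlety to watch is the independence assumption underlying Wald's Lemma: one must verify that the number of summands $N$, although itself a maximum of two geometric variables, is drawn independently of the summand copies $R_n^{(j)}$, so that the multiplicative identity $E[R_{n+1}]=E[N]\,E[R_n]$ is valid. A second, more clerical point is that Lemma~\ref{lemma:max-mean-geometric-exponential} gets applied twice with two different parameters — with $p=\pgen$ for the base case and with $p=\pswap$ for the recursion — so I would confirm that $K^{(1)}, K^{(2)}$ are geometric with the same support convention used in that lemma's proof (survival function $(1-p)^t$ at integers). Beyond these two checks the argument is a routine induction, and I do not anticipate any genuine analytic obstacle.
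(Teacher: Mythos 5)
Your proposal is correct and matches the paper's proof essentially verbatim: the same induction on $n$, with the base case $E[R_0]=\nu_0$ dispatched by Lemma~\ref{lemma:max-mean-geometric-exponential} at $p=\pgen$, and the inductive step using Wald's Lemma to get $E[R_{n+1}]=E[N]\cdot E[R_n]$ followed by a second application of that lemma at $p=\pswap$ to evaluate $E[N]=E[\max(K^{(1)},K^{(2)})]$. Your extra attention to the independence of $N$ from the summands $R_n^{(j)}$ (needed for Wald) and to the support convention of the geometric distribution is sound but goes slightly beyond what the paper spells out.
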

\begin{proof}
    We use induction on $n$.
    The case $n=0$ is treated in Lemma~\ref{lemma:max-mean-geometric-exponential} where we set $p=\pgen$.
    For the induction case, we note that $N$ and $R_{n}$ are independent, hence we may apply Wald's Lemma \cite{wald1947sequential} to obtain
    \[
        E[R_{n+1}] = E\left[ \sum_{j=1}^N R_{n}^{(j)} \right]
        = E[N] \cdot E[R_n]
        .
    \]
    Since $N = \max(K^{(1)}, K^{(2)})$ and $K$ is geometrically distributed with parameter $\pswap$, we again invoke Lemma~\ref{lemma:max-mean-geometric-exponential} to obtain
\[
    E[N] = E[\max(K^{(1)}, K^{(2)})] = \frac{3 - 2 \pswap}{\pswap(2 - \pswap)}
    .
\]
This finishes the proof.
\end{proof}

As second step, we will show that $M_n$ stochastically dominates $R_n$, for which we need the following two auxiliary lemmas and corollary.

\begin{lemma}
	\label{lemma:stochastic-ordering-max}
	Let $P$ and $Q$ be independent real-valued random variables, and $P'$ and $Q'$ i.i.d. copies of $P$ and $Q$ respectively. Then $P \stgeq Q$ implies $\max(P, P') \stgeq \max(Q, Q')$.
\end{lemma}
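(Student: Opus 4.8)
The plan is to prove the statement $P \stgeq Q \implies \max(P, P') \stgeq \max(Q, Q')$ directly from the definition of stochastic dominance (Def.~\ref{def:stochastic-dominance}), by writing out the co-CDF of each maximum in terms of the co-CDFs of its summands and exploiting independence. First I would fix an arbitrary threshold $z$ and recall the elementary identity, already used in eq.~\eqref{eq:mz-easy}, that for independent $X, Y$ one has $\Pr(\max(X,Y) > z) = 1 - \Pr(X \leq z)\Pr(Y \leq z)$. Applying this to the two maxima, and using that $P'$ is an i.i.d.\ copy of $P$ (so $\Pr(P' \leq z) = \Pr(P \leq z)$) and likewise $Q' $ of $Q$, I obtain
\[
\Pr(\max(P, P') > z) = 1 - \Pr(P \leq z)^2, \qquad \Pr(\max(Q, Q') > z) = 1 - \Pr(Q \leq z)^2.
\]

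The second step is to compare these two expressions. The hypothesis $P \stgeq Q$ means $\Pr(P > z) \geq \Pr(Q > z)$ for all $z$, which is equivalent to $\Pr(P \leq z) \leq \Pr(Q \leq z)$. Writing $a = \Pr(P \leq z)$ and $b = \Pr(Q \leq z)$, I have $0 \leq a \leq b \leq 1$, and the goal reduces to showing $1 - a^2 \geq 1 - b^2$, i.e.\ $a^2 \leq b^2$. Since the map $u \mapsto u^2$ is monotonically nondecreasing on $[0,1]$, this is immediate from $a \leq b$. Hence $\Pr(\max(P,P') > z) \geq \Pr(\max(Q,Q') > z)$ for every $z$, which is exactly $\max(P,P') \stgeq \max(Q,Q')$.

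I do not expect a genuine obstacle here, as the argument is a short monotonicity computation; the only points requiring a little care are bookkeeping ones. I would make sure to state explicitly that squaring preserves the inequality because the CDF values lie in $[0,1]$ (so the base is nonnegative), mirroring the monotonicity reasoning used earlier in eq.~\eqref{eq:mxplusy-before-next}. I would also note that the independence assumption is what licenses the factorization $\Pr(\max(P,P') \leq z) = \Pr(P \leq z)\Pr(P' \leq z)$, and that the ``identically distributed'' part of the i.i.d.\ hypothesis is what lets me replace $\Pr(P' \leq z)$ by $\Pr(P \leq z)$ to get a clean square. No appeal to the NBU property is needed for this lemma; it is a purely order-theoretic fact about maxima of independent copies, which will subsequently be combined with the compound-sum structure to establish $M_n \stgeq R_n$.
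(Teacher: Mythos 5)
Your proposal is correct and follows essentially the same route as the paper's proof: both pass to CDFs via $\Pr(P \leq z) \leq \Pr(Q \leq z)$, factorize $\Pr(\max(P,P') \leq z) = \Pr(P \leq z)^2$ using independence and the i.i.d.\ hypothesis, and conclude by monotonicity of $u \mapsto u^2$ on $[0,1]$. The only difference is presentational; your explicit remarks about where independence and identical distribution are used are a slight elaboration of what the paper leaves implicit.
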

\begin{proof}
	By definition of $P \stgeq Q$, we have, for all real numbers $z$, that
	$
	\Pr(P > z) \geq \Pr(Q > z)
	$
    and therefore
	$
    \Pr(P \leq z) \leq \Pr(Q \leq z).
	$
	Consequently,
	\[
	\Pr(\max(P, P') > z) = 1 - \Pr(\max(P, P') \leq z) = 1 - \Pr(P \leq z)^2
	\geq
	1 - \Pr(Q \leq z)^2 = \Pr(\max(Q, Q') > z)
	\]
    for all real numbers $z$, so $\max(P, P') \stgeq \max(Q, Q')$.
\end{proof}

\begin{lemma}
    \label{lemma:insight}
    Let $P$ and $Q$ be independent, real-valued random variables with identical domain. Then $\max(P, Q) \stgeq Q$.
\end{lemma}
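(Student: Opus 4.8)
The plan is to unwind the definition of stochastic dominance (Def.~\ref{def:stochastic-dominance}) and reduce the claim to an elementary pointwise comparison. By that definition, to establish $\max(P, Q) \stgeq Q$ I must verify that $\Pr(\max(P, Q) > z) \geq \Pr(Q > z)$ for every $z$ in the common domain of the two variables.

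First I would observe that the inequality $\max(P, Q) \geq Q$ holds deterministically, i.e.\ for every outcome in the underlying sample space, simply because the maximum of two numbers can never be smaller than either of them. This outcome-wise domination immediately yields the event inclusion $\{Q > z\} \subseteq \{\max(P, Q) > z\}$: any outcome at which $Q$ exceeds $z$ is also one at which the maximum exceeds $z$. Applying monotonicity of probability to this inclusion then gives $\Pr(\max(P, Q) > z) \geq \Pr(Q > z)$ for all $z$, which is precisely the defining inequality of $\max(P, Q) \stgeq Q$.

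There is no genuine obstacle here; the statement is elementary and the entire argument rests on the single deterministic bound $\max(P,Q) \geq Q$. The only point worth flagging is that neither the independence hypothesis nor any continuity assumption is actually required for this particular conclusion — these are presumably included because the lemma is invoked alongside Lemma~\ref{lemma:stochastic-ordering-max} in the surrounding development, where independence does play a role.
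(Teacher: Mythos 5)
Your proof is correct, but it takes a genuinely different route from the paper's. You argue via the deterministic pointwise bound $\max(P,Q) \geq Q$, which gives the event inclusion $\{Q > z\} \subseteq \{\max(P,Q) > z\}$ and hence the co-CDF inequality by monotonicity of probability. The paper instead factorizes the CDF of the maximum using independence,
\[
\Pr(\max(P, Q) > z) = 1 - \Pr(P \leq z)\Pr(Q \leq z) \geq 1 - \Pr(Q \leq z) = \Pr(Q > z),
\]
bounding the factor $\Pr(P \leq z)$ by $1$. Your observation that independence is superfluous is accurate and is the main thing your argument buys: the lemma holds for any two real-valued random variables on a common probability space, whereas the paper's computation only goes through under the independence hypothesis it assumes. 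What the paper's route buys is stylistic uniformity — the CDF-factorization calculation is exactly the one reused in Lemma~\ref{lemma:stochastic-ordering-max} and in Corollary~\ref{lemma:sum-length-stgeq}, where independence genuinely matters — but as a proof of this lemma in isolation, your version is the more elementary and more general one.
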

\begin{proof}
    For any real number $z$, we have
    \[
    \Pr(\max(P, Q) > z)
    = 1 - \Pr(\max(P, Q) \leq z)
    = 1 - \Pr(P \leq z)\Pr(Q \leq z)
    \stackrel{*}{\geq} 1 - \Pr(Q \leq z)
    = \Pr(Q > z)
    \]
    where the inequality * holds because $\Pr(P < z) \leq 1$.
\end{proof}

\begin{cor}
    \label{lemma:sum-length-stgeq}
    Let $A^{(1)}, A^{(2)}, A^{(3)}$ and $A^{(4)}$ be independent and identically distributed random variables with domain $\{1, 2, 3, \dots\}$.
    Furthermore, let $X, Y$ and $Z$ be independent and identically distributed random variables with domain $[0, \infty)$.
    Then
\begin{equation}
    \label{eq:sum-length-stgeq}
        \max\left(
        \sum_{a=1}^{A^{(1)}} X^{(a)},
        \sum_{b=1}^{A^{(2)}} Y^{(b)}
        \right)
        \stgeq
        \sum_{a=1}^{\max\left(A^{(3)}, A^{(4)}\right)} Z^{(a)}
        .
\end{equation}
\end{cor}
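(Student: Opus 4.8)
The plan is to reduce the claim to a comparison made conditionally on the realized summation lengths. The right-hand side of~\eqref{eq:sum-length-stgeq} depends on $A^{(3)},A^{(4)}$ only through their maximum, and since $A^{(3)},A^{(4)}$ are i.i.d.\ copies of the same distribution as $A^{(1)},A^{(2)}$, the first step is to observe that $\max(A^{(3)},A^{(4)})$ and $\max(A^{(1)},A^{(2)})$ have the same law. This lets me phrase both sides in terms of $A^{(1)},A^{(2)}$ alone and, because $X$, $Y$ and $Z$ are identically distributed, reindex every summand family to a single common summand.

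For the main estimate I would introduce the survival function of a length-$n$ sum, $F_n(z):=\Pr\!\left(\sum_{a=1}^{n} Z^{(a)} > z\right)$. Conditioning the left-hand side on $A^{(1)}=a_1$ and $A^{(2)}=a_2$, the two inner sums become independent sums of fixed length $a_1$ and $a_2$; by the elementary identity used already in eq.~\eqref{eq:mz}, the conditional survival probability of their maximum equals $1-(1-F_{a_1}(z))(1-F_{a_2}(z)) = F_{a_1}(z)+F_{a_2}(z)\bigl(1-F_{a_1}(z)\bigr)$. Taking without loss of generality $a_1\geq a_2$, the second summand is nonnegative because $F_{a_1}(z),F_{a_2}(z)\in[0,1]$, so this quantity is at least $F_{a_1}(z)=F_{\max(a_1,a_2)}(z)$. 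This is exactly the conditional version of Lemma~\ref{lemma:insight} ($\max(P,Q)\stgeq Q$).

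Averaging this conditional inequality over $A^{(1)},A^{(2)}$ gives $\Pr(\max(\cdots)>z)\geq E\bigl[F_{\max(A^{(1)},A^{(2)})}(z)\bigr]$, and the first step identifies the latter with $\Pr\bigl(\sum_{a=1}^{\max(A^{(3)},A^{(4)})} Z^{(a)}>z\bigr)$; since $z$ was arbitrary, Def.~\ref{def:stochastic-dominance} yields~\eqref{eq:sum-length-stgeq}. The same conclusion can be reached by an explicit coupling: take $A^{(3)}=A^{(1)}$, $A^{(4)}=A^{(2)}$ and let the right-hand sum reuse the summand draws of whichever left-hand inner sum is longer; then $\max$ of the two dominates that sum pointwise, and the reuse is distributionally valid precisely because $X\overset{d}{=}Y\overset{d}{=}Z$.

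The algebra here is trivial; the care lies entirely in the distributional bookkeeping. The step that must be handled carefully is the conditioning itself: it is legitimate only because the lengths $A^{(1)},A^{(2)}$ are independent of the summand families, so that conditioning on the lengths leaves genuine fixed-length i.i.d.\ sums, and the reindexing of $X,Y$ to a common $Z$ is permitted only because the summands are identically distributed. Once these hypotheses are invoked correctly, the one-line inequality $F_{a_1}+F_{a_2}(1-F_{a_1})\geq F_{a_1}$ carries the full stochastic-dominance statement.
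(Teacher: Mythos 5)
Your proof is correct and takes essentially the same route as the paper's: you condition on the realized lengths, apply the conditional form of Lemma~\ref{lemma:insight} (your inequality $F_{a_1}+F_{a_2}(1-F_{a_1})\geq F_{\max(a_1,a_2)}$ is precisely the paper's case split on whether $\max(i,j)=i$ or $j$ in comparing $C^y_{ij}$ with $D^y_{ij}$), reindex the summands to a common law using $X\overset{d}{=}Y\overset{d}{=}Z$, and average over the lengths using that $\max(A^{(1)},A^{(2)})\overset{d}{=}\max(A^{(3)},A^{(4)})$. Your explicit flagging of the length--summand independence matches a hypothesis the paper uses implicitly, and the closing coupling sketch is a valid alternative but not needed for the core argument.
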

\begin{proof}
    We note that random sums occur on both sides of eq.~\eqref{eq:sum-length-stgeq}, that is, sums whose number of terms is a random variable.
    We expand both sides of the inequality from the lemma as a weighted sum over instantiations of this random variable.
    For the left-hand-side, we obtain
    \[
        \Pr( 
        \max\left(
        \sum_{a=1}^{A^{(1)}} X^{(a)},
        \sum_{b=1}^{A^{(2)}} Y^{(b)}
        \right)
        > y)
        =
        \sum_{i=1}^{\infty} \sum_{j=1}^{\infty} \Pr(A^{(1)} = i) \cdot \Pr(A^{(2)}=j) \cdot C^y_{ij}
    \]
    for $y\geq 0$, where we have defined
    \[
        C^y_{ij} := \Pr(\max\left(\sum_{a=1}^i X^{(a)}, \sum_{b=1}^j Y^{(b)}\right) > y)
    \]
    and for the right-hand-side we get
    \[
        \Pr(
        \sum_{a=1}^{\max\left(A^{(3)}, A^{(4)}\right)} Z^{(a)}
        > y
        )
        =
        \sum_{i=1}^{\infty} \sum_{j=1}^{\infty} \Pr(A^{(3)} = i) \cdot \Pr(A^{(4)}=j) \cdot D^y_{ij}
    \]
    with
    \[
        D^y_{ij} :=
    \Pr(\sum_{a=1}^{\max(i,j)} Z^{(a)} > y)
.
    \]
    Given fixed $i$ and $j$, we define random variables $P$ and $Q$ as follows:
    \begin{itemize}
        \item if $\max(i, j) = i>j$, then define $P= \sum_{b=1}^{j} Y^{(b)}$ and $Q= \sum_{a=1}^{i} X^{(a)}$;
        \item if $\max(i, j) = j$, then define $P= \sum_{a=1}^{i} X^{(a)}$ and $Q= \sum_{b=1}^{j} Y^{(b)}$;
    \end{itemize}
    In both cases, application of Lemma~\ref{lemma:insight} that $\max(P, Q) \stgeq Q$ yields $C^y_{ij} \geq \Pr(\sum_{a=1}^{\max(i,j)} Y^{(a)} > y)$.
    Since $Y$ and $Z$ are i.i.d., we obtain $C^y_{ij} \geq D^y_{ij}$ for all $y\geq 0$ and for all $i, j$.
    This concludes the proof.
\end{proof}

Now we have the tools to show that $M_n$ stochastically dominates $R_n$, as described in the following lemma.

\begin{lemma}
    \label{lemma:repeater-mean-lower-bound}
    For all $n\geq 0$, we have
    \[
       M_n \stgeq R_{n}
    \]
    where $M_n = \max(\Trepeater_n^{(1)}, \Trepeater_n^{(2)})$ as defined in eq.~\eqref{eq:Mn-repeater}.
\end{lemma}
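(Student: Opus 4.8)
The plan is to prove $M_n \stgeq R_n$ by induction on $n$, using the already-established Corollary~\ref{lemma:sum-length-stgeq} to strip off the outer maximum in $M_{n+1}$ and the induction hypothesis to swap the summands inside the resulting random sum.

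For the base case $n=0$, the variables $M_0 = \max(T_0^{(1)}, T_0^{(2)})$ and $R_0 = \max(T_0^{(1)}, T_0^{(2)})$ have identical distributions (both flavors of $T_0$, geometric or exponential, treated uniformly), so $M_0 \stgeq R_0$ holds trivially, in fact with equality of co-CDFs. For the inductive step I assume $M_n \stgeq R_n$. Writing out the definitions in eq.~\eqref{eq:Tn} and eq.~\eqref{eq:Mn-repeater}, we have $M_{n+1} = \max\left(\sum_{k=1}^{K^{(1)}} M_n^{(k)}, \sum_{k=1}^{K^{(2)}} M_n^{(k)}\right)$, where $K^{(1)}, K^{(2)}$ are independent geometric copies with parameter $\pswap$ (so with domain $\{1,2,\dots\}$), and the $M_n^{(k)}$ are i.i.d. copies of $M_n$, independent of the $K^{(i)}$. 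Applying Corollary~\ref{lemma:sum-length-stgeq} with $X = Y = Z = M_n$ and $A^{(i)} = K^{(i)}$ yields
\[
    M_{n+1} \stgeq \sum_{j=1}^{N} M_n^{(j)}, \qquad N := \max(K^{(3)}, K^{(4)}),
\]
where $K^{(3)}, K^{(4)}$ are fresh geometric copies. Since $N$ has the same law as the number of terms defining $R_{n+1}$, I identify the two.

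It then remains to replace each $M_n^{(j)}$ by $R_n^{(j)}$ inside this random sum. Here I would invoke the auxiliary fact that stochastic dominance is preserved under i.i.d.\ random summation with an independent number of terms: if $U \stgeq V$ with $U,V$ nonnegative and $N$ is a positive-integer random variable independent of the summands, then $\sum_{j=1}^N U^{(j)} \stgeq \sum_{j=1}^N V^{(j)}$. Conditioning on $N=m$ reduces this to the deterministic statement $\sum_{j=1}^m U^{(j)} \stgeq \sum_{j=1}^m V^{(j)}$, which follows by induction on $m$ from the elementary convolution fact that $U + W \stgeq V + W$ for any independent nonnegative $W$ (obtained by integrating $\Pr(U > t-w) \geq \Pr(V > t-w)$ against the law of $W$) together with transitivity of $\stgeq$. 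Using the induction hypothesis $M_n \stgeq R_n$, this gives $\sum_{j=1}^N M_n^{(j)} \stgeq \sum_{j=1}^N R_n^{(j)} = R_{n+1}$, and chaining with the earlier domination via transitivity of $\stgeq$ delivers $M_{n+1} \stgeq R_{n+1}$, closing the induction.

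The base case and the substitution into Corollary~\ref{lemma:sum-length-stgeq} are routine. The only ingredient not already in hand is the preservation of $\stgeq$ under random sums; I expect this to be the main (though modest) obstacle and would isolate it as a short lemma, proved by the convolution-plus-conditioning argument above, taking care throughout that $N$ is independent of both families of summands.
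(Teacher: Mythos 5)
Your proof is correct, and it rearranges the paper's argument in a genuinely different (and slightly leaner) way. Both proofs are inductions on $n$ with Corollary~\ref{lemma:sum-length-stgeq} as the engine, but the order of the two key moves is reversed. The paper first applies the induction hypothesis inside each random sum, passing from $T_{m+1}=\sum_{k=1}^{K}M_m^{(k)}$ to $T_{m+1}\stgeq\sum_{k=1}^{K}R_m^{(k)}$, then lifts this to the maximum of two independent copies via Lemma~\ref{lemma:stochastic-ordering-max}, and only then invokes Corollary~\ref{lemma:sum-length-stgeq} to collapse the max of two sums into a single sum of length $\max(K^{(1)},K^{(2)})$. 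You instead hit $M_{n+1}$ directly with Corollary~\ref{lemma:sum-length-stgeq} (since $M_{n+1}$ is literally a max of two random sums of copies of $M_n$) and only afterwards replace the summands $M_n^{(j)}$ by $R_n^{(j)}$. This buys you two things: Lemma~\ref{lemma:stochastic-ordering-max} becomes unnecessary, and the fact you isolate as a lemma --- preservation of $\stgeq$ under i.i.d.\ random summation with an independent number of terms, proved by the convolution-plus-conditioning argument you sketch --- is exactly the fact the paper uses \emph{silently} in its step ``now apply the induction hypothesis,'' so your version supplies a justification the paper's proof omits. One notational caveat: in your expansion $M_{n+1}=\max\bigl(\sum_{k=1}^{K^{(1)}}M_n^{(k)},\sum_{k=1}^{K^{(2)}}M_n^{(k)}\bigr)$ the shared superscripts suggest the same copies of $M_n$ appear in both sums; the two inner sums must use disjoint independent families, both because $T_{n+1}^{(1)}$ and $T_{n+1}^{(2)}$ are independent and because the hypotheses of Corollary~\ref{lemma:sum-length-stgeq} require $X$ and $Y$ to be independent. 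With that notation fixed, the argument goes through as written.
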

\begin{proof}
    We use induction on $n$.
    The base case $n=0$ is an equality by definition of $R_0$.
    Now assume the statement from the lemma holds for $n=m$.
    We will show it also holds for $n=m+1$.
    First, we expand the definition of $\Trepeater_{m+1}$:
    \[
        \Trepeater_{m+1} = \sum_{k=1}^K \max(\Trepeater_m^{(1)}, \Trepeater_m^{(2)})
    \]
    Now apply the induction hypothesis:
    \[
        \Trepeater_{m+1} \stgeq \sum_{k=1}^K R_m^{(k)}
        .
        \]
        Using Lemma~\ref{lemma:stochastic-ordering-max} we obtain
        \[
            \max(\Trepeater_{m+1}^{(1)}, \Trepeater_{m+1}^{(2)}) \stgeq \max\left( \sum_{j=1}^{K^{(1)}} R_m^{(i)}, \sum_{j=1}^{K^{(2)}} R_m^{(j)}\right)
        .
        \]
    Applying Corollary~\ref{lemma:sum-length-stgeq} to the previous equation yields
    \[
        \max(\Trepeater_{m+1}^{(1)}, \Trepeater_{m+1}^{(2)}) \stgeq \sum_{k=1}^{\max(K^{(1)},K^{(2)})} R_m^{(k)}
        .
        \]
        The left-hand side of the previous equation equals $M_{m+1}$ by definition, while its right-hand side is $R_{m+1}$, again by definition.
        This concludes the proof.
\end{proof}

The third step is to derive the lower bound on the mean delivery time from Prop.~\ref{prop:repeater-bounds}.
This follows directly from Lemma~\ref{lemma:repeater-mean-lower-bound}, as expressed in the following corollary.

\begin{cor}
    \textnormal{(Lower bound from Prop.~\ref{prop:repeater-bounds})}
    For $n>0$, it holds that
    \[
    E[\Trepeater_n] \geq \frac{1}{\pswap} \cdot \left(\frac{3 - 2 \pswap}{\pswap(2 - \pswap)}\right)^{n-1} \cdot \nu_0
    \]
    where $\nu_0$ is given in eq.~\eqref{eq:mean-max-geometric-nu0} or eq.~\eqref{eq:mean-max-exponential-nu0}, depending on whether elementary-link generation is modelled following a geometric or exponential distribution, respectively.
\end{cor}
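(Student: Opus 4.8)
The plan is to assemble three facts that are already in place by the time the corollary is reached, so the argument is essentially bookkeeping. First I would re-express the mean completion time at level $n$ in terms of the mean of the level-$(n-1)$ maximum. By eq.~\eqref{eq:Tn}, for $n>0$ we have $T_n = \sum_{k=1}^{K} M_{n-1}^{(k)}$ with $K$ geometrically distributed with parameter $\pswap$, so $E[K] = 1/\pswap$. Wald's Lemma then gives the clean identity $E[T_n] = E[M_{n-1}]/\pswap$, exactly as already used for the upper bound.

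Second, I would convert the stochastic-dominance statement into a statement about means. Lemma~\ref{lemma:repeater-mean-lower-bound} establishes $M_m \stgeq R_m$ for every $m \geq 0$; instantiating it at $m = n-1$ yields $M_{n-1} \stgeq R_{n-1}$. Since both $M_{n-1}$ and $R_{n-1}$ are nonnegative, Lemma~\ref{lemma:stochastic-dominance-means} upgrades this to the inequality of expectations $E[M_{n-1}] \geq E[R_{n-1}]$. Third, I would plug in the closed form for $E[R_{n-1}]$ from eq.~\eqref{eq:mean_Rn_bound}, namely $E[R_{n-1}] = \left(\tfrac{3 - 2\pswap}{\pswap(2 - \pswap)}\right)^{n-1}\nu_0$. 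Chaining the three steps, $E[T_n] = E[M_{n-1}]/\pswap \geq E[R_{n-1}]/\pswap = \tfrac{1}{\pswap}\left(\tfrac{3 - 2\pswap}{\pswap(2 - \pswap)}\right)^{n-1}\nu_0$, which is precisely the asserted lower bound.

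There is no genuinely new obstacle at this stage: all of the difficulty was already discharged in the preceding lemmas, in particular in proving $M_n \stgeq R_n$ (which rested on the delicate Corollary~\ref{lemma:sum-length-stgeq}, that the maximum of two random sums with i.i.d.\ summands stochastically dominates a single random sum whose length is the pointwise-larger of the two lengths) and in computing $E[R_n]$ recursively via Wald's Lemma and Lemma~\ref{lemma:max-mean-geometric-exponential}. The only points that require care are the index shift between the level-$n$ completion time and the level-$(n-1)$ maximum—so that the exponent comes out as $n-1$ rather than $n$—and correctly tracking which of the two $\nu_0$ conventions (geometric versus exponential elementary-link generation, eq.~\eqref{eq:mean-max-geometric-nu0} or eq.~\eqref{eq:mean-max-exponential-nu0}) is carried through the recursion, since the same combinatorial argument applies verbatim to both base cases.
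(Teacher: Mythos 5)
Your proposal is correct and matches the paper's own proof step for step: Wald's Lemma giving $E[T_n] = E[M_{n-1}]/\pswap$, then $E[M_{n-1}] \geq E[R_{n-1}]$ via Lemma~\ref{lemma:repeater-mean-lower-bound} together with Lemma~\ref{lemma:stochastic-dominance-means}, and finally substitution of the closed form for $E[R_{n-1}]$ from eq.~\eqref{eq:mean_Rn_bound}. Your remarks on the index shift and the two $\nu_0$ conventions are accurate and consistent with how the paper handles the base case.
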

\begin{proof}
    By Wald's Lemma \cite{wald1947sequential} and the fact that $K$ and $M_{n-1}$ are independent, it follows from the definition of $\Trepeater_n$ for $n>0$ that $E[\Trepeater_n] = E[K] \cdot E[M_{n-1}] = \frac{1}{\pswap} \cdot E[M_{n-1}]$.
    A lower bound on $E[M_n]$ follows from Lemma~\ref{lemma:stochastic-dominance-means} and Lemma~\ref{lemma:repeater-mean-lower-bound}, resulting into
    \[
        E[\Trepeater_n] = \frac{1}{\pswap} \cdot E[M_{n-1}]
        \geq \frac{1}{\pswap} \cdot E[R_{n-1}]
        .
    \]
    The proof finishes by substituting $E[R_{n-1}]$ by the right-hand side of eq.~\eqref{eq:mean_Rn_bound}.
\end{proof}

\subsection{Proof of lower bound in Proposition~\ref{prop:repeater-bounds}(b)}
Here, we provide the expression for $\mlower$ in Prop.~\ref{prop:repeater-bounds}(c), which is a lower bound to the mean of the delivery time after both input links are ready, but before the entanglement swap.
Formally, $\mlower$ is a lower bound to the mean of $M_{n-1}$ from eq.~\eqref{eq:Mn-repeater}.
Such a bound follows directly from Lemma~\ref{lemma:repeater-mean-lower-bound} by the fact that $X \stgeq Y$ implies $E[X] \geq E[Y]$ (see Lemma~\ref{lemma:stochastic-dominance-means}):
\[
    \mlower = E\left[R_{n-1}\right]
\]
and $E[R_{n-1}]$ is given in eq.~\eqref{eq:mean_Rn_bound}.

\section{Proof of Cor.~\ref{cor:asymmetric-repeater} for asymmetric nested repeater chains}
\label{app:asymmetric-repeater}

Here, we sketch the proof of the following proposition (see also eq.~\eqref{eq:asym-repeater}), from which Cor.~\ref{cor:asymmetric-repeater} immediately follows.

\begin{prop}
\label{prop:asym-repeater}
Denote by $\Trepeater_n^{\textnormal{asym}}$ the completion time of a nested repeater chain with $n$ levels (see Sec.~\ref{sec:application-repeater}) where the success probabilities for entanglement generation and entanglement swapping are not constant throughout the chain.
By $\Trepeater_n^{\max}$ ($\Trepeater_n^{\min}$) denote the completion time of the symmetric repeater protocol where all success probabilities are replaced by their maximum (minimum), denoted as $\pgen^{\max}$ and $\pswap^{\max}$ ($\pgen^{\min}$ and $\pswap^{\min}$).
Then
\[
    \Trepeater_n^{\max}
    \stleq
\Trepeater_n^{\textnormal{asym}}
    \stleq
    \Trepeater_n^{\min}
\]
where $\stleq$ denotes stochastic domination (Def.~\ref{def:stochastic-dominance}).
\end{prop}

For proving Prop:~\ref{prop:asym-repeater}, we need the following lemma.

\begin{lemma}[Stochastic domination preserved under maxima and geometric sums]
\label{lemma:stdom-preserved}
Let $A_j, B_j$ ($1\leq j\leq n$) be independent random variables, taking values in the nonnegative real numbers.
Furthermore, let $K$ and $M$ be independent random variables, geometrically distributed with parameters $p_K$ and $p_M$, respectively.
Then:
\begin{enumerate}[(1)]
\item If $p_M \leq p_K$, then $K \stleq M$;
\item If for all $j$, $A_j \stleq B_j$, then $\max(A_1, A_2, \dots, A_n) \stleq \max(B_1, B_2, \dots, B_n)$
\item If $K \stleq M$ and $A_1 \stleq B_1$, then $\sum_{k=1}^{K} A_1^{(k)} \stleq \sum_{m=1}^{M} B_1^{(m)}$.
\end{enumerate}
\end{lemma}
\begin{proof}
Statement (1) is proven as $\Pr(K > t) = (1 - p_K)^t \geq (1 - p_M)^t = \Pr(M > t)$ for any $t\in \{1, 2, \dots\}$.
For (2), we write 
\begin{eqnarray*}
\Pr(\max(A_1, A_2, \dots, A_n) > t) 
&=&
1 - \Pr(\max(A_1, A_2, \dots, A_n) \leq t) 
\\ &=&
1 - \Pr(A_1 \leq t) \cdot \dots \cdot \Pr(A_n \leq t)
\\
&\leq&
1 - \Pr(B_1 \leq t) \cdot \dots \cdot \Pr(B_n \leq t)
\qquad\textnormal{because $A_j \stleq B_j$ for all $j$}
\\ &=&
\Pr(\max(B_1, B_2, \dots, B_n) > t)
\end{eqnarray*}
for any $t\geq 0$.
Statement (3) was proven as Lemma 2(e) in Appendix B of \cite{brand2020efficient}.
\end{proof}

With Lemma~\ref{lemma:stdom-preserved}, Prop.~\ref{prop:asym-repeater} is most easily proven by induction over the number of nesting levels, following the definition of $\Trepeater_n$ as given at the start of App.~\ref{app:repeater-bounds}.
For elementary links, note that the elementary-link delivery time of $\Trepeater^{\textnormal{asym}}$ stochastically dominates $\Trepeater^{\max}$ and is stochastically dominated by $\Trepeater^{\min}$, by Lemma~\ref{lemma:stdom-preserved}(i).
For the induction step, first both quantum states which are inputted to the entanglement swap need to be prepared.
By Lemma~\ref{lemma:stdom-preserved}(ii), the time this takes in the asymmetric case again stochastically dominates $\Trepeater^{\max}$ and is stochastically dominated by $\Trepeater^{\min}$.
The induction case is finished by noting that a similar ordering holds for the completion time after the entanglement swap, by Lemma~\ref{lemma:stdom-preserved}(iii).

\end{document}